\newcommand{\Rset}{\mathbb{R}}
\newcommand{\vcsp}{VC-weighted spanning tree problem\xspace}
\newcommand{\vcst}{VC-weighted Steiner tree problem\xspace}
\newcommand{\Afam}{\mathcal{A}}
\newtheorem{theorem}{Theorem}
\newtheorem{lemma}{Lemma}
\newtheorem{corollary}{Corollary}
\newtheorem{definition}{Definition}
\title{Computing a tree having a small vertex cover\footnote{A
preliminary version of this paper appeared in the proceedings of the 10th Annual International Conference on Combinatorial Optimization and Applications (COCOA'16).}}
\author{Takuro Fukunaga\thanks{RIKEN Center for Advanced Intelligence Project,
1-4-1 Nihonbashi, Chuo-ku, Tokyo 103-0022, Japan.
Email:~{\tt takuro.fukunaga@riken.jp}} \and Takanori Maehara\thanks{RIKEN Center for Advanced Intelligence Project,
1-4-1 Nihonbashi, Chuo-ku, Tokyo 103-0022, Japan.
Email:~{\tt takanori.maehara@riken.jp}}}
\date{}
\begin{document}

\maketitle

\begin{abstract}
 We consider a new Steiner tree problem, called vertex-cover-weighted Steiner tree problem.
 This problem defines the weight of a Steiner tree as the minimum weight
 of vertex covers in the tree, and seeks a minimum-weight Steiner
 tree in a given vertex-weighted undirected graph.
 Since it is included by the Steiner tree activation problem, the problem
 admits an $O(\log n)$-approximation algorithm in general graphs with
 $n$ vertices. This approximation factor is tight up to a constant because it is 
 NP-hard to achieve
 an $o(\log n)$-approximation for the vertex-cover-weighted Steiner tree
 problem in general graphs
 even if the given vertex weights are uniform and a spanning tree is required instead of a Steiner tree.
 In this paper, we present constant-factor approximation algorithms
 for the problem in unit disk graphs and in graphs excluding a fixed
 minor. For the latter graph class, our algorithm can be also applied
 for the Steiner tree activation problem.
 
 keywords: Steiner tree; unit disk graph; minor-free graph
\end{abstract}

\section{Introduction}
\label{sec:intro}

The problem of finding a minimum-weight tree in a graph 
has been extensively studied in the field of combinatorial optimization.
A typical example is the Steiner tree problem in edge-weighted graphs;
it has a long history of approximation algorithms,
culminating in the currently best approximation factor of
$1.39$~\cite{ByrkaGRS13,GoemansORZ12}.
The Steiner tree problem has also been studied in vertex-weighted graphs,
where
the weight of a Steiner tree is defined as the total weight of the
vertices spanned by the tree.
We call this problem the \emph{vertex-weighted Steiner tree problem}
while the problem in the edge-weighted graphs is called the
\emph{edge-weighted Steiner tree problem}.
There is an $O(\log k)$-approximation algorithm for the vertex-weighted Steiner tree problem
with $k$ terminals, and 
it is NP-hard to improve this factor because the problem includes the
set cover problem~\cite{DinurS14,KleinR93a}.

In this paper, we present a new variation of the Steiner tree problem.
Our problem is motivated by the following situation in communication
networks.
We assume that messages are exchanged
along a tree in a network;
this is the case in many popular routing protocols such as the spanning
tree protocol~\cite{Perlman85}.
We consider locating devices
that will monitor the traffic in the tree.
If a device is located at a vertex, it
can monitor all the traffic that passes through links incident to that
vertex.
How many devices do we need for monitoring all of the traffic in the tree?
Obviously, it depends on the topology of the tree.
If the tree is a star, it suffices to locate one device at the
center.
If the tree is a path on $n$ vertices,
then it requires $\lfloor n/2 \rfloor$ devices, because 
any vertex
cover of the path consists of at least $\lfloor n/2 \rfloor$ vertices.
Our problem is to compute a tree that minimizes the number (or, more generally, the weight) of devices required to monitor all of the traffic.

More formally, our problem is defined as
follows.
Let $G=(V,E)$ be an undirected graph associated with nonnegative vertex
weights
$w\in \Rset_+^V$.
Throughout this paper, we will denote $|V|$ by $n$.
Let $T \subseteq V$ be a set of vertices called \emph{terminals}.
The problem seeks a pair comprising a tree $F$ and a vertex set $U \subseteq
V(F)$ such that (i) $F$ is a Steiner tree with regard to the terminal
set $T$ (i.e., $T \subseteq V(F)$),
and (ii) $U$ is a vertex cover of $F$ (i.e., each edge in $F$ is
incident to at least one vertex in $U$).
The objective is to find such a pair $(F,U)$ that minimizes
the weight $w(U):=\sum_{v \in U}w(v)$ of the vertex cover.
We call this the \emph{vertex-cover-weighted (VC-weighted) Steiner tree
problem}.
We call the special case in which $V=T$ the
\emph{vertex-cover-weighted (VC-weighted) spanning tree problem}.
The aim of this paper is to investigate these fundamental problems.

Besides the motivation from the communication networks,
there is another reason for the importance of the
VC-weighted Steiner tree problem.
The VC-weighted Steiner tree problem is 
a special case of the \emph{Steiner tree activation problem}, which was
formulated by Panigrahi~\cite{Panigrahi11}.
In the Steiner tree activation problem,
we are given a set $W$ of nonnegative real numbers,
and each edge $uv$ in the graph is
associated with an activation function $f_{uv} \colon W \times W
\rightarrow \{\top,\bot\}$,
where $\top$ indicates that an edge $uv$ is activated, 
and $\bot$ indicates that it is not.
The activation function is assumed to be monotone
(i.e., if $f_{uv}(i,j)=\top$, $i \leq i'$, and $j \leq j'$,
 then $f_{uv}(i',j')=\top$).
A solution for the problem is defined as a $|V|$-dimensional 
vector $x \in W^V$.
We say that a solution $x$ \emph{activates} an edge $uv$ if $f_{uv}(x(u),x(v))=\top$.
The problem seeks a solution $x$ that minimizes $x(V):=\sum_{v\in
V}x(v)$
subject to the constraint that the edges activated by $x$ include a Steiner
tree.
 To see that the Steiner tree activation problem includes the
VC-weighted Steiner tree problem,
define
$W$ as $\{0\}\cup \{w(v) \colon v \in V\}$,
and let $f_{uv}(i,j)= \top$
if and only if $i \geq w(u)$ or $j\geq w(v)$ for each edge $uv$.
Under this setting, if $x$ is a minimal vector that activates an edge
set $F$, 
the objective $x(V)$ is equal to the minimum weight of vertex covers
of the subgraph induced by $F$.
Hence the Steiner tree activation problem under this setting
is equivalent to the VC-weighted  Steiner tree problem.

The Steiner tree activation problem models various natural settings in 
the design of wireless networks~\cite{Panigrahi11}.
Moreover, it includes several other well-studied problems.
One of them is the vertex-weighted
Steiner tree problem.
Indeed,
the vertex-weighted Steiner tree problem corresponds to the activation function
$f_{uv}$ such that
$f_{uv}(i,j)= \top$
if and only if $i \geq w(u)$ and $j\geq w(v)$ for each edge $uv$
the end vertices of which 
are associated with
vertex weights $w(u)$ and $w(v)$.
Note the similarity of the activation functions for the
VC-weighted and the vertex-weighted Steiner tree problems.
Thus the VC-weighted Steiner tree problem is an interesting variant of
the vertex-weighted Steiner tree and the Steiner tree activation
problems, which are studied actively in the literature.

In most of the known applications of the Steiner tree activation
problem (including the vertex-weighted and the VC-weighted
Steiner tree problems), $|W|$ is bounded by a polynomial of the input size.
Thus, it is usual to 
allow an algorithm for the algorithm to run in polynomial time in $|W|$.
Under this condition,
it is known that the Steiner tree
activation problems admits an $O(\log k)$-approximation algorithm when $|T|=k$.
Indeed,
Panigrahi~\cite{Panigrahi11}
gave an approximation-preserving reduction from the problem
to the vertex-weighted Steiner tree problem, and hence
the $O(\log k)$-approximation algorithm for the latter problem implies
that for the former problem.
This approximation factor is tight because
it is NP-hard to improve the factor for the vertex-weighted Steiner tree
problem, as mentioned above.
Even in the spanning tree variant of the Steiner tree activation problem,
the factor is proven to be tight~\cite{Panigrahi11}.

Since the VC-weighted Steiner tree problem is included by the Steiner tree activation problem,
the $O(\log k)$-approximation algorithm can also be applied to the
VC-weighted problem.
Moreover, Angel~et~al.~\cite{AngelBCK15} presented 
a
reduction from the dominating set problem
to the VC-weighted spanning tree problem with uniform vertex
weights.
This reduction implies that it is NP-hard to approximate the VC-weighted spanning tree problem
 within a factor of $o(\log n)$ even if the given vertex weights are uniform.
In Section~\ref{sec:general}, we present an alternative proof for this fact.
 
\subsection{Our contributions}

Because of the hardness of the VC-weighted spanning tree problem in general graphs, we will consider restricted graph classes.
We show that
the \vcst is NP-hard for unit disk graphs and planar graphs
(Theorem~\ref{thm.nphardness}).
Moreover, we present
constant-factor approximation algorithms
for the problem in
unit disk graphs
(Corollary~\ref{cor.unitdisk})
and
in graphs excluding a fixed minor (Theorem~\ref{thm.planer}).
Note that the latter graph class contains planar graphs.
For these graphs, it is known that
the vertex-weighted Steiner tree problem is NP-hard and admits constant-factor
approximation algorithms~\cite{DemaineHK09a,ZouLGW09,ZouLKW08}.
Hence it is natural to investigate approximation algorithms for the \vcst in these graph classes.
Moreover, unit disk graphs are regarded as a reasonable model
of wireless networks,
and the vertex-weighted Steiner tree problem in unit disk graphs
has been actively studied in this context (see, e.g., \cite{AmbuhlEMN06,HuangLS15,ZouLGW09,ZouLKW08,ZouWXLDWW11}).
Since our problem is motivated by an application in
communication networks,
it is reasonable to investigate the problem in unit disk graphs.

Our algorithm for unit disk graphs is based on a novel reduction to
another optimization problem.
The problem used in the reduction
is similar to the connected facility location problem studied in~\cite{EisenbrandGRS10,SwamyK04},
but it is slightly different.
In the connected facility location problem,
we are given sets $C,D \subseteq V$ of 
clients and facilities with an edge-weighted undirected graph $G=(V,E)$.
If a facility $f \in D$ is opened by paying an associated opening cost,
any client $i \in C$  can be allocated to $f$
by paying the allocation cost, which is defined as the shortest path length from
$i$ to $f$ multiplied by the demand of $i$.
The opened facilities must be spanned by a Steiner tree, which incurs a
connection cost defined as the edge weight of the tree multiplied by a
given multiplier $M$.
The objective is to find a set of opened facilities and
a Steiner tree connecting them, that minimizes the sum of the opening cost,
the allocation cost, and the connection cost.
Our problem differs from the connected facility location problem in the
fact that each client $i$ can be allocated to an opened facility $f$
only when $i$ is adjacent to $f$ in $G$, there is no cost for the
allocation,
and the multiplier $M$ for the connection cost is fixed to $1$.
It can be regarded as a combination of the dominating set and the
edge-weighted Steiner tree problems.
Hence we call this the \emph{connected dominating set problem},
although in the literature, this name is usually reserved for the case where
the connection cost is defined by vertex weights and all vertices in the
graph are clients.
From a geometric property of unit disk graphs, we show that
our reduction preserves the approximation guarantee up to a constant
factor if the graph is a
unit disk graph (Theorem~\ref{thm.reduction}).
To solve the connected dominating set problem,
we present a linear programming (LP) rounding algorithm.
This algorithm relies on an idea 
presented by Huang, Li, and Shi~\cite{HuangLS15},
who considered a variant of the connected dominating set problem in unit disk
graphs.
Although their algorithm is only for minimizing the number of vertices
in a solution,
we prove that it can be extended to our problem.

For graphs excluding a fixed minor, we solve the \vcst
by presenting a constant-factor approximation algorithm for 
the Steiner tree
activation problem.
Our algorithm simply combines
the reduction to the vertex-weighted Steiner tree problem~\cite{Panigrahi11}
and 
the algorithm of Demaine,
Hajiaghayi, and Klein~\cite{DemaineHK09a} for the vertex-weighted
Steiner tree problem in graphs excluding a fixed minor.
However, analyzing it is not straightforward,
because the reduction does not preserve the minor-freeness of the input graphs.
Nevertheless, we show that the algorithm of Demaine~et~al.\ achieves a
constant-factor approximation for the graphs constructed by the
reduction
(Section~\ref{sec.planer}).

\subsection{Organization}

The remainder of this paper is organized as follows.
Section~\ref{sec:prelim} introduces the notation and preliminary facts used
throughout the paper.
Section~\ref{sec:general}
presents hardness results on the \vcst.
Sections~\ref{sec.steiner-unitdisk} and~\ref{sec.planer}
provide constant-factor approximation
algorithms
for unit disk graphs and for graphs excluding a fixed minor, respectively.
Section~\ref{sec:conclusion} concludes the paper.

\section{Preliminaries}
\label{sec:prelim}

We first define the notation used in this paper.
Let $G=(V,E)$ be a graph with the vertex set $V$ and the edge set $E$.
We sometimes identify the graph $G$ with its edge set $E$ 
and by $V(G)$ denote the vertex set of $G$.
When $G$ is a tree, $L(G)$ denotes the set of leaves of $G$.

Let $U$ be a subset of $V$.
Then $G-U$ denotes the subgraph of $G$
obtained
by removing all vertices in $U$ and all edges incident to them.
$G[U]$ denotes the subgraph of $G$ induced by $U$.

We denote a singleton vertex set $\{v\}$ by $v$.
An edge joining vertices $u$ and $v$ is denoted by $uv$.
For a vertex $v$, $N_G(v)$ denotes the set of neighbors of $v$ in a
graph $G$, i.e.,
$N_G(v)=\{u \in V \colon uv \in E\}$.
$N_G[v]$ indicates $N_G(v) \cup v$.
We let $d_G(v)$ denote $|N_G(v)|$.
For a set $U$ of vertices,
$N_G(U)$ denotes $(\bigcup_{v \in U}N_G(v))\setminus U$.
When the graph $G$ is clear from the context,
we may remove the subscripts from our notation.
We say that a vertex set $U$ \emph{dominates} a vertex $v$
if $v \in U$, or $U$ contains a vertex $u$ that is adjacent to $v$.
If a vertex set $U$ dominates each vertex $v$ in another vertex set $W$,
then we say that $U$ dominates $W$.

A graph $G$
is a unit disk graph when
there is an embedding of the vertex set 
into the Euclidean plane
such that
two vertices $u$ and $v$ are joined by an edge
if and only if their Euclidean distance
is at most 1.
If $G$ is a unit disk graph, we call such an embedding
a \emph{geometric representation} of $G$.

Let $G$ and $H$ be undirected graphs.
We say that $H$ is a \emph{minor} of $G$ if $H$ is obtained from $G$
by deleting
edges and vertices and by contracting edges.
If $H$ is not a minor of $G$, $G$ is called \emph{$H$-minor-free}.
By Kuratowski's theorem, a graph is planar if and only if it is
$K_5$-minor-free and $K_{3,3}$-minor-free.

As mentioned in Section~\ref{sec:intro},
the Steiner tree activation problem 
contains both the VC-weighted and the vertex-weighted Steiner tree problems.
In addition,
Panigrahi~\cite{Panigrahi11}
showed that the Steiner tree activation problem can be reduced to the
vertex-weighted Steiner tree problem.
Since we use this reduction later,
we present it in the following theorem.

 \begin{theorem}[\cite{Panigrahi11}]
  \label{thm.reduction}
 There is an approximation-preserving reduction from the Steiner tree
 activation problem to the vertex-weighted Steiner tree problem.
 Hence, if the latter problem admits an $\alpha$-approximation algorithm,
 the former problem does also.
 \end{theorem}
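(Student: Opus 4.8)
The plan is to exhibit Panigrahi's reduction explicitly and then verify it in both directions. Given an instance of the Steiner tree activation problem on $G=(V,E)$ with value set $W$ (which I may assume contains $0$, as in all applications), activation functions $\{f_{uv}\}$, and terminal set $T$, I would construct a vertex-weighted graph $H$ as follows: for every $v \in V$ and every $i \in W$ create a vertex $(v,i)$ of weight $i$; for every edge $uv \in E$ and every pair $(i,j) \in W \times W$ with $f_{uv}(i,j)=\top$ add the edge $(u,i)(v,j)$; and for every terminal $t \in T$ add a fresh vertex $\hat t$ of weight $0$ joined to $(t,i)$ for all $i \in W$. The terminal set of the vertex-weighted instance is $\{\hat t : t \in T\}$. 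Since $|W|$ is polynomially bounded, $H$ has polynomial size, so this is a polynomial-time transformation.

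Next I would show $\mathrm{OPT}(H) \le \mathrm{OPT}(\mathrm{act})$. Take any solution $x \in W^V$ activating a Steiner tree $S$ over $T$. In $H$ select the vertices $\{(v,x(v)) : v \in V(S)\} \cup \{\hat t : t \in T\}$; for each edge $uv \in S$ the pair $(x(u),x(v))$ activates $uv$, so $(u,x(u))(v,x(v))$ is an edge of $H$, and each $\hat t$ is adjacent to $(t,x(t))$. This yields a connected subgraph of $H$ spanning every $\hat t$; any spanning tree of it is a feasible vertex-weighted Steiner tree, of weight $\sum_{v \in V(S)} x(v) \le x(V)$.

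For the reverse direction --- the part that needs care --- I would take any feasible Steiner tree $R$ in $H$ and recover an activation solution of cost at most $w(R)$. Define $x(v) = \max(\{i : (v,i) \in V(R)\} \cup \{0\})$ for each $v \in V$; note that $R$ may legitimately contain several copies $(v,i)$ of one original vertex, which is precisely why the ``take the largest level'' step is needed. Then $x(V) = \sum_v x(v) \le w(R)$ since the weights are nonnegative. To see $x$ is feasible, project $R$ onto $G$ by $(v,i) \mapsto v$ and $\hat t \mapsto t$ (discarding the resulting loops): the image is a connected subgraph of $G$ containing every terminal $t$. For any edge $(u,i)(v,j)$ of $R$ with $u \ne v$ we have $f_{uv}(i,j)=\top$, and since $i \le x(u)$ and $j \le x(v)$, monotonicity of $f_{uv}$ gives $f_{uv}(x(u),x(v))=\top$; hence every edge of the projected subgraph is activated by $x$, so $x$ activates a Steiner tree over $T$.

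Combining the two directions, an $\alpha$-approximate vertex-weighted Steiner tree $R$ in $H$ satisfies $w(R) \le \alpha\,\mathrm{OPT}(H) \le \alpha\,\mathrm{OPT}(\mathrm{act})$, and it converts to an activation solution of cost at most $w(R)$, giving an $\alpha$-approximation for the activation instance. The main obstacle is exactly the possibility that a Steiner tree in $H$ uses inconsistent levels for a single vertex or detours through many of its copies; the monotonicity hypothesis on the activation functions is what allows the ``maximum level'' projection to absorb this without increasing the cost.
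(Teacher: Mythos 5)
Your construction and recovery map are exactly the reduction the paper uses (copies $(v,i)$ of weight $i$, edges for activated pairs, weight-$0$ terminal anchors, and projection via the maximum level present in the tree), so the proposal matches the paper's proof; if anything, your version is slightly more careful, since you bound $x(V)\le w(R)$ directly rather than invoking minimality of the Steiner tree to claim equality, and you handle the case where no copy of $v$ appears by defaulting to level $0$.
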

 \begin{proof}
Recall that an instance $I$ of the Steiner tree activation problem
consists of 
an undirected graph $G=(V,E)$, a terminal set $T$,
a range $W \subseteq \Rset_+$,
and an activation function $f_{uv}\colon W \times W\rightarrow \{\top,\bot\}$
for each $uv \in E$.
We define a copy $v_i$
of a vertex $v$ for each $v \in V$ and $i \in W$,
and associate $v_i$ with the weight $w(v_i):=i$.
We join $u_i$ and $v_j$ by an edge
if $uv \in E$ and $f_{uv}(i,j)=\top$.
In addition, we join each terminal $t \in T$
with its copies $t_i$, $i \in W$.
The weight $w(t)$ of $t$ is defined to be $0$.
Let $G'$ be the obtained graph on the vertex set
$T \cup \{v_i \colon v \in V, i \in W\}$.
Let $I'$ be the instance of
the vertex weighted Steiner tree problem
that consists of the graph $G'$, the vertex weights $w$, and the
terminal set $T$.
From an inclusion-wise minimal Steiner tree $F$ feasible to $I'$,
define a vector $x \in W^V$
by $x(v)=\max\{i \in W \colon v_i \in V(F)\}$ for each $v \in V$.
Then $x$ activates a Steiner tree in the original instance
$I$, and $x(V)$ is equal to the vertex
weight of $F$.
Hence
there is a one-to-one correspondence between
a minimal Steiner tree in $I'$
and a feasible solution in $I$,
and they have the same objective values in their own problems.
Hence the above reduction is an approximation-preserving reduction from the Steiner
tree activation problem to the vertex-weighted Steiner tree problem.
\end{proof}

We note that the reduction
claimed in Theorem~\ref{thm.reduction}
transforms the input graph,
and hence it may not be closed in a graph class.
In fact, we can observe that the reduction is not closed in
unit disk graphs or planar graphs.

\section{Hardness of VC-weighted spanning tree and Steiner tree problems}
\label{sec:general}

In this section, we present hardness
results of the VC-weighted spanning tree and Steiner tree problems.
First, we prove that it is NP-hard to approximate
the \vcsp
within a factor of $o(\log n)$ even if vertex weights are uniform.
This fact has already been proven by Angel~et~al.~\cite{AngelBCK15}.
Here, we give an alternative proof which consists of an approximation-preserving reduction from the set cover
problem.

 \begin{theorem}
  \label{thm:hardness}
  There exists a constant $c$ such that
 it is NP-hard to approximate \vcsp
 within a factor of $c\log n$ even if the given vertex weights are uniform.
 \end{theorem}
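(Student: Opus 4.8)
The plan is to give an approximation-preserving reduction from the set cover problem, for which it is NP-hard to approximate within $c'\log m$ for some constant $c' > 0$ (indeed within $(1-\varepsilon)\ln m$) by \cite{DinurS14}, with $m$ the size of the universe. Given a set cover instance with universe $\mathcal{U} = \{e_1,\dots,e_m\}$ and a family $\mathcal{S}=\{S_1,\dots,S_k\}$ satisfying $\bigcup_j S_j = \mathcal{U}$ (and, without loss of generality, no empty set), I build a graph $G$ as follows: take a root vertex $r$, one \emph{set vertex} $v_j$ for each $S_j$, and $N := k+2$ \emph{copies} $e_i^1,\dots,e_i^N$ of each element $e_i$; join $r$ to every $v_j$, and join $v_j$ to $e_i^\ell$ whenever $e_i \in S_j$ (for every $\ell$). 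All vertex weights equal $1$ and the terminal set is all of $V(G)$, so this is an instance of the VC-weighted spanning tree problem with uniform weights. Note $G$ is connected and $n := |V(G)| = 1 + k + mN = O(mk)$, so $\log n = O(\log m)$ on the instances used for set cover hardness, where $k$ is polynomial in $m$.

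For the easy direction, from a set cover $\mathcal{C}\subseteq\mathcal{S}$ I form the spanning tree $F$ consisting of the star joining $r$ to all $v_j$ together with, for each $e_i$, the $N$ edges joining its copies to one fixed $v_j$ with $S_j\in\mathcal{C}$ and $e_i\in S_j$. Every tree edge of the form $rv_j$ is covered by $r$, and every tree edge joining a copy to its parent $v_j$ has $S_j\in\mathcal{C}$; hence $\{r\}\cup\{v_j : S_j\in\mathcal{C}\}$ is a vertex cover of $F$, giving $\mathrm{OPT}_{\mathrm{VCST}} \le 1 + \mathrm{OPT}_{\mathrm{SC}}$. For the reverse direction, let $(F,U)$ be any feasible pair with $|U| < N$. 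I claim $\{S_j : v_j\in U\}$ is a set cover. Fix $e_i$; each copy $e_i^\ell$ is a vertex of $F$ (which has at least two vertices) and so has an incident tree edge, which must join it to some set vertex $v_j$ with $e_i\in S_j$. If no such $v_j$ belonged to $U$, covering these $N$ tree edges would force all $N$ copies into $U$, contradicting $|U| < N$; hence some $S_j \ni e_i$ has $v_j \in U$, proving the claim. Since an optimal $(F,U)$ has $|U| \le 1+\mathrm{OPT}_{\mathrm{SC}} \le 1+k < N$, the claim applies to it and yields $\mathrm{OPT}_{\mathrm{SC}} \le \mathrm{OPT}_{\mathrm{VCST}}$.

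Chaining the two bounds, suppose there is an $\alpha(n)$-approximation algorithm for the VC-weighted spanning tree problem. Run it on $G$ and, if its output has value exceeding $1+k$, replace it by the trivial solution of value at most $1+k$; the returned pair $(F,U)$ then always satisfies $|U| \le 1+k < N$, so $U' := \{S_j : v_j\in U\}$ is a set cover with $|U'| \le |U| \le \alpha(n)\,\mathrm{OPT}_{\mathrm{VCST}} \le \alpha(n)\,(1+\mathrm{OPT}_{\mathrm{SC}}) \le 2\alpha(n)\,\mathrm{OPT}_{\mathrm{SC}}$, using $\mathrm{OPT}_{\mathrm{SC}}\ge 1$. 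Thus we obtain a $2\alpha(O(mk))$-approximation for set cover; if $\alpha(n) = o(\log n)$ this would be an $o(\log m)$-approximation, contradicting NP-hardness. Hence some constant $c$ makes $c\log n$-approximation of the uniform-weight VC-weighted spanning tree problem NP-hard. The one point requiring care is the calibration of $N$ together with the trivial-solution clean-up: these keep the analysis in the regime $|U| < N$ where a set cover can be extracted, and once that is guaranteed the structural "forced copies" argument is immediate.
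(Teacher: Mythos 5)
Your proof is correct, and it takes a genuinely different route from the paper's. Both arguments reduce from set cover with uniform vertex weights, but the gadgets and the key step differ. The paper keeps one vertex per element and one per set, adds a clique on the set vertices to guarantee connectivity, and then handles the reverse direction by a local exchange argument: whenever the vertex cover $U$ contains an element vertex, the tree is rewired and that vertex is swapped for an adjacent set vertex without increasing $|U|$, so one may assume $U$ lies entirely among the set vertices; this makes the reduction lossless ($\mathrm{OPT}$ values coincide exactly). You instead use a root vertex $r$ for connectivity and replicate each element $N=k+2$ times, so that failing to pick a covering set vertex for some element forces all $N$ of its copies into the cover --- a counting argument that applies to \emph{any} feasible solution of value below $N$ with no need to modify the tree. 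The price is the additive $+1$ for the root and the resulting factor $2$ (absorbed into the constant $c$), plus a polynomially larger instance; the benefit is that you avoid the exchange/normalization step entirely, and your trivial-solution clean-up correctly keeps the algorithm's output in the regime $|U|<N$ where the extraction works. Both arguments establish the theorem as stated.
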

\begin{proof}
 Recall that an instance of the set cover problem consists of
a finite set $S$ and a family $\mathcal{X}$ of subsets of $S$.
The objective of the problem
 is to find a subfamily $\mathcal{X}'$ 
 of $\mathcal{X}$
 such that $\bigcup_{X \in \mathcal{X}'}X = S$ and
 $|\mathcal{X}'|$ is minimized.
 A feasible solution for the set cover instance is called a set cover.

For each set $X \in \mathcal{X}$,
we define a vertex $v_X$ corresponding to it.
Let $W$ denote $\{v_X \colon X \in \mathcal{X}\}$.
The set cover
instance defines a bipartite graph 
on the vertex set $S \cup W$;
two vertices $v \in S$ and $v_X \in W$ are joined by an edge
if and only if $v \in X$.
To this bipartite graph,
we add edges so that 
any two vertices in $W$ are adjacent.
Let $G=(S \cup W, E)$ be the obtained graph.
 We reduce the set cover instance to the instance of
\vcsp
 on this graph with uniform vertex weights.
 We prove that this reduction is approximation-preserving.
 For every $\epsilon > 0$,
 it is NP-hard to approximate the set cover instance within a
 factor of $(1-\epsilon)\ln |S|$~\cite{DinurS14}, 
 and this hardness holds for instances such that $|\mathcal{X}|$ is polynomial 
 on $|S|$.
 Thus, the reduction proves the theorem.

Let $\mathcal{X}'$ be a set cover for instance $(S, \mathcal{X})$.
We define $W':=\{v_X \in W \colon X \in \mathcal{X}'\}$ from $\mathcal{X}'$.
Since $\mathcal{X}'$ is a set cover, 
for each $u \in S$, 
there exists $v_X \in W'$ that is adjacent to $u$ in $G$.
We define $F$ as the set of
 edges joining such pairs of $u \in S$ and $v_X \in W'$.
 Let $F'$ be a star on $W$ such that its center is an arbitrary vertex
 in $W'$ and $F'$ spans all vertices in $W$.
 Then $F\cup F'$ is a spanning tree in $G$,
 and $W'$ is a vertex cover in $F \cup F'$.
 Thus, for any set cover $\mathcal{X}'$,
 there exists a feasible solution $(F\cup F',W')$
 with $|W'|=|\mathcal{X}'|$
 for the instance of \vcsp.
 
Let us consider the other direction.
Let $(F,U)$ be a solution for the instance of \vcsp.
If $U \subseteq W$, then
$\mathcal{X}':=\{X \in \mathcal{X} \colon v_X \in U\}$
is a set cover in $(S,\mathcal{X})$
because each vertex $v \in S$ is adjacent to a vertex $v_X \in U$ 
in $F$, and hence $\mathcal{X}'$ contains a set $X$ with
$v \in X$.
Notice that $|U|=|\mathcal{X}'|$.

Suppose that $U$ contains a vertex $u \in S$. Let $v_{X_1},\ldots,v_{X_k}$ be the
vertices in $W$ that are adjacent to $u$ on $F$.
We prove that the solution $(F,U)$ can be modified to another feasible
solution $(F',U')$ with $|U' \cap S| < |U \cap S|$ and $|U'|\leq |U|$.
By repeating this modification, we obtain a feasible solution whose
vertex cover is contained by $W$.
We define $U'$ as $(U \setminus u) \cup v_{X_1}$.
Let $F'$ be the edge set obtained from $F$ by replacing all edges
$uv_{X_2},\ldots,uv_{X_k}$
with $v_{X_1} v_{X_2},\ldots,v_{X_1} v_{X_k}$.
Then $F'$ is a spanning tree, and
$U'$ is a vertex cover on $F'$.
\end{proof}

 As noted in Theorem~\ref{thm.reduction},
 there is an approximation-preserving reduction from the Steiner tree
 activation problem to the vertex-weighted Steiner tree problem,
 and the latter problem admits an $O(\log |T|)$-approximation algorithm
 in general graphs.
 Since the Steiner tree activation problem includes
 \vcst, this indicates that \vcst
 also admits an $O(\log |T|)$-approximation algorithm.
By Theorem~\ref{thm:hardness}, the approximation factor achieved by this
algorithm is tight up to a constant.

Next, we consider unit disk graphs and planar graphs.
We show that the \vcst is NP-hard for these graph classes.

 \begin{theorem}
  \label{thm.nphardness}
 \vcst is NP-hard for unit disk graphs and for planar graphs.
 \end{theorem}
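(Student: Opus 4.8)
The plan is to prove both statements by essentially the same reduction, from the minimum connected vertex cover problem. Given a connected graph $H=(V_H,E_H)$, I would construct an instance of the \vcst as the $1$-subdivision $G'$ of $H$: for every edge $e=uv\in E_H$ add a vertex $m_e$ adjacent to $u$ and to $v$ and delete $uv$, put unit weight on every vertex, and let the terminal set be $\{m_e : e\in E_H\}$. A $1$-subdivision of a planar graph is planar, so $G'$ is planar whenever $H$ is. I claim that the optimum of the \vcst on $G'$ equals the minimum size of a connected vertex cover of $H$; since connected vertex cover is NP-hard on planar graphs, this gives the planar case.

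One inequality is immediate: from a minimum connected vertex cover $C$ of $H$, take a spanning tree of the connected graph $H[C]$, keep the $|C|-1$ subdivision vertices of its edges, and attach every remaining $m_e$ as a pendant to an endpoint of $e$ lying in $C$ (one exists since $C$ is a vertex cover). This gives a tree $F\subseteq G'$ through all terminals whose set of original vertices is exactly $C$; as $G'$ is bipartite between $V_H$ and the subdivision vertices, $C$ itself is a vertex cover of $F$, so the optimum is at most $|C|$.

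For the reverse inequality, take an optimal $(F,U)$ with $F$ inclusion-minimal (a non-terminal leaf could be deleted without increasing the vertex cover weight), so that no original vertex is a leaf of $F$, and set $C=V(F)\cap V_H$. Each terminal $m_e$ lies in $F$ with degree at least one, so $C$ meets every edge of $H$; a degree-two subdivision vertex of $F$ joins two $C$-vertices that are adjacent in $H$, while pendant subdivision vertices join nothing, so connectivity of $F$ forces $H[C]$ to be connected; hence $C$ is a connected vertex cover and $|C|$ is at least the minimum size of a connected vertex cover of $H$. It remains to see $|U|\ge|C|$: a degree count on the bipartite tree $F$ shows it has exactly $|C|-1$ degree-two subdivision vertices (a subdivided spanning tree of $H[C]$) and $|E_H|-|C|+1$ pendant ones, and when the latter number is positive a near-perfect matching of that subdivided spanning tree, extended along an augmenting path to a pendant leaf, is a matching of size $|C|$ in $F$, so by K\"onig's theorem $F$ has no vertex cover smaller than $|C|$. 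The residual case $|E_H|=|C|-1$ forces $H$ to be a tree with $C=V_H$, and then $F$ is the $1$-subdivision of $H$, whose minimum vertex cover has size $|V_H|-1$, still at least the minimum connected vertex cover of the tree $H$. This settles the planar case.

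For unit disk graphs I would run the same reduction on an instance of connected vertex cover that is first drawn on the integer grid with its edges routed as well-separated axis-parallel grid paths and then scaled so that every resulting edge is a grid segment of length two; after this preparation each $m_e$ lands on a grid point, $G'$ sits entirely on the integer lattice, and two of its vertices are at Euclidean distance at most $1$ exactly when they are adjacent, which is the required geometric representation. I expect this last step to be the main obstacle: the $1$-subdivision of an arbitrary planar graph need not be a unit disk graph (already the $1$-subdivision of $K_4$ is not), so one must both ensure that connected vertex cover remains NP-hard on the restricted grid-drawn instances used as input and verify the geometric representation by hand so that no spurious unit distances are created.
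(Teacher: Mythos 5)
Your planar-graph argument is correct and genuinely different from the paper's. The paper handles both graph classes with a single reduction from the edge-weighted Steiner tree problem in grid graphs: each grid edge (stretched to length $4$) is subdivided by three equally spaced vertices $i,j,k$ with weights $w(i)=w(k)=+\infty$ and $w(j)$ equal to the original edge weight, so the resulting graph is simultaneously a unit disk graph and planar, and the minimum vertex cover of any Steiner tree reproduces the edge weight of the corresponding grid Steiner tree. Your reduction instead goes from planar connected vertex cover via the $1$-subdivision with \emph{uniform} weights; the matching/K\"onig count showing that a minimal Steiner tree through all subdivision vertices has vertex cover number exactly $|C|$ (or $|C|-1$ in the degenerate tree case) is sound, as is the extraction of a connected vertex cover from a leaf-trimmed optimal tree. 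This buys something the paper's gadget does not: NP-hardness for planar graphs already with unit vertex weights. (You should cite the NP-hardness of connected vertex cover on planar graphs, e.g.\ Garey and Johnson.)

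The unit disk half, however, has a genuine gap, which you yourself flag. Routing the edges of $H$ as axis-parallel grid paths is incompatible with the $1$-subdivision gadget: once an edge of $H$ is a long bending path in the plane, its two endpoints are geometrically far apart and no single midpoint vertex $m_e$ can be within unit distance of both, so you would have to subdivide each edge into a path whose combinatorial length tracks its geometric length, and with uniform weights that destroys the clean correspondence with connected vertex cover (the cover cost then depends on the routing lengths). The alternative you sketch --- starting from connected vertex cover instances already drawn with every edge a straight length-$2$ segment and no spurious unit distances --- requires an NP-hardness result for connected vertex cover on that geometrically restricted class, which you do not establish and which is not a standard citation. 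So as written the unit disk case is not proved. The paper avoids this precisely by reducing from a problem that is already known to be hard on grid graphs (edge-weighted Steiner tree) and by using non-uniform weights ($0$ and $+\infty$) on the subdivision gadget so that arbitrarily long subdivided edges still encode a single edge weight; if you want to keep your connected-vertex-cover route for unit disk graphs you would need an analogous length-insensitive gadget or a hardness result for a suitably embeddable subclass.
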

\begin{proof}
 Garey and  Johnson~\cite{GareyJ77} proved that the edge-weighted Steiner tree problem
 is NP-hard even in the grid graphs.
 We show that the edge-weighted Steiner tree problem in grid graphs
 can be reduced to the \vcst in unit disk graphs.
 We can suppose without loss of generality that the distance between every two adjacent vertices $u$ and
 $v$ in the  grid graph
 is 4. 
 For each pair of adjacent vertices $u$ and $v$,
 we
 subdivide the edge $uv$
 by adding three new vertices $i,j$, and $k$ distributed equally between
 $u$ and $v$ as illustrated in Figure~\ref{fig.reduction}.
 The graph obtained by this way is a unit disk graph
 because two vertices in the graph are adjacent if and only if the
 distance between them is exactly a unit length.
 From the edge weights $w'$ of the original graph,
 we define the vertex weights $w$ of the new graph by
 $w(u)=w(v)=0$, $w(i)=w(k)=+\infty$, and $w(j)=w'(ij)$.
 Then, if edges $ui$, $ij$, $jk$, and $kv$ are included in a Steiner
 tree,
 a minimum-weight vertex cover on the tree includes $u$, $v$, and $j$.
 Hence, the minimum weight of vertex covers on a Steiner tree in the unit
 disk graph is equal to the edge weight of the corresponding Steiner
 tree in the original graph.
 Hence this gives an approximation-preserving reduction from the edge-weighted Steiner
 tree problem in grid graphs to the \vcst in unit disk graphs.

 Notice that the graph constructed by the above reduction is also planar.
 Hence this also proves the NP-hardness of \vcst in planar graphs.
\end{proof}

 \begin{figure}
  \centering
 \includegraphics[scale=.65]{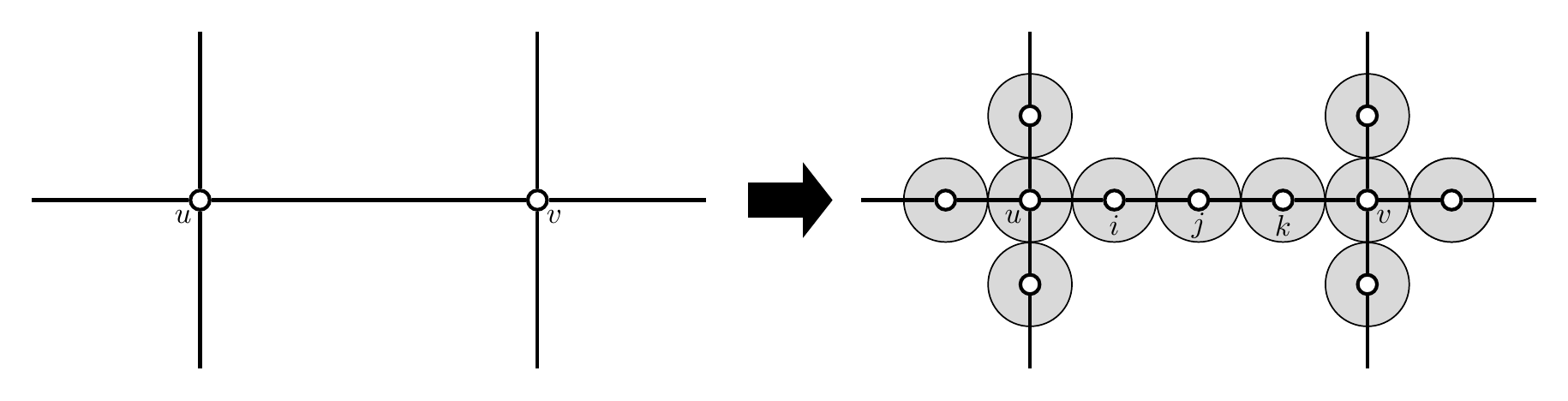}
 \caption{Reduction from the edge-weighted Steiner tree problem in grid
 graphs to the \vcst in planar unit disk graphs}
 \label{fig.reduction}
 \end{figure}

\section{\vcst in unit disk graphs}
\label{sec.steiner-unitdisk}

The aim of this section
is to
present a constant-factor
approximation algorithm for the \vcst in unit disk graphs.
Our algorithm consists of two steps. In the first step, we reduce the \vcst 
to another optimization problem, which is called the connected dominating set 
problem.
We present this reduction in Section~\ref{sec:reduction}.
When the original problem is the \vcsp,
the connected dominating set problem can be solved by a simpler
algorithm, which we will explain in Section~\ref{sec.spanning}.
Then, in Section~\ref{sec.lp-rounding}, we will present an LP-rounding algorithm for the general case of the
connected dominating set problem.

  \subsection{Reduction}
\label{sec:reduction}

As noted in Theorem~\ref{thm.reduction}, the Steiner tree activation
problem can be reduced to the vertex-weighted Steiner tree problem.
Since the \vcst is included in the Steiner tree activation problem,
the reduction also applies to the \vcst. Since there is a constant-factor
approximation algorithm for the vertex-weighted Steiner
tree problem in unit disk graphs, this reduction gives a constant-factor
approximation for the VC-weighted problem if the graph constructed by the reduction is a
unit disk graph. However, the constructed graph may not be a unit disk
graph, even if the original graph is a unit disk graph.
This can be seen through an example. Let $G=(V,E)$ be a star graph 
with the center vertex $v$. We do not specify the terminal set $T$,
because
it is not important here.
When the original
 problem is the \vcst, the reduction given in Theorem~\ref{thm.reduction}
can be simplified as follows.
 Two copies $u_{\circ}$ and $u_{\bullet}$ are constructed from each vertex $u \in V(G)$,
 where $u_{\circ}$ indicates that $u$ is not included in
 the vertex cover of the solution,
 and $u_{\bullet}$ indicates that it is.
 For each edge $uu' \in E$, the graph constructed by the reduction
 contains edges $u_{\bullet}u'_{\bullet}$, 
 $u_{\circ}u'_{\bullet}$, and $u_{\bullet}u'_{\circ}$.
 Moreover, each terminal $t \in T$ is adjacent to its copies
 $t_{\circ}$ and $t_{\bullet}$.
 Let $G'$ be a graph on the vertex set $T \cup
 \{u_{\circ},u_{\bullet}\colon u \in V\}$ that is constructed in this way.
 By solving the vertex-weighted Steiner tree problem on $G'$,
 we can compute a solution to the VC-weighted problem on $G$. 
 If the degree of $v$ is at most 5, $G$ is a unit disk graph.
 The degree of $v_{\bullet}$ in $G'$ is twice the degree of $v$ in $G$,
 and any two neighbors of $v_{\bullet}$ are not adjacent in $G'$.
 Hence $G'$ contains $K_{1,6}$ as an induced subgraph if the
 degree of $v$ in $G$ is at least 3.
 Since no unit disk graph contains $K_{1,6}$ as an induced subgraph,
 this means that $G'$ is not a unit disk graph.

 Our idea is to reduce the \vcst to another optimization problem.
 This is inspired by a constant-factor approximation algorithm for the
 vertex-weighted Steiner tree problem on a unit disk graph~\cite{ZouLGW09,ZouLKW08},
 which is based on a reduction from the vertex-weighted to
 the edge-weighted Steiner tree problems.
 The reduction is possible because the former problem always admits an optimal
 Steiner tree in which the maximum degree is a constant if the graph is
 a unit disk graph.
 Even in the \vcst, if there is an optimal solution $(F,U)$
 such that the maximum degree of vertices in the vertex cover $U$ is a
 constant in the Steiner tree $F$, then 
 we can reduce the problem to the edge-weighted Steiner tree problem.
 However,
 there is an instance of the \vcst
 that admits no such optimal solution.
 For example, if the vertex weights are uniform, and the graph includes a
 star in which all of the terminals are its leaves,
 then the star is the Steiner tree in the optimal solution,
 and its minimum vertex cover consists of only the center of the star.
 The degree of the center of the star is not bounded by a constant.
 Hence it seems that it would be difficult to reduce the \vcst to the edge-weighted problem.

 We reduce the \vcst to a problem similar to the connected facility location problem.
 The reduction is based on a geometric
 property of unit disk graphs, and we will begin by proving this property.
 The following lemma gives a basic claim about geometry.
 For two points $i$ and $j$ on the plane, we denote their Euclidean
 distance by $l_{ij}$.

\begin{lemma}\label{lem.geometry}
 Let $i$ be a point on the Euclidean plane,
 and let $\alpha \in (1/2,3/4]$.
 Let $P$ be a set of points on the plane such that $\alpha < l_{ik}/l_{ij}\leq
 1/\alpha$ holds for all $j,k \in P$.
 If $|P|  > 2\pi/\arccos (\alpha/2 + 3/(8\alpha))$,
 then there exist $j,k \in P$
 such that $l_{jk} < \max\{l_{ij}, l_{ik}\}/2$.
\end{lemma}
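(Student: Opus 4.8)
The plan is to combine a pigeonhole argument on the directions of the points of $P$ as seen from $i$ with the law of cosines. Fix two points $j,k\in P$ and, without loss of generality, assume $l_{ij}\le l_{ik}$, so $\max\{l_{ij},l_{ik}\}=l_{ik}$. Writing $\theta:=\angle jik$ for the angle at $i$ in the triangle with vertices $i,j,k$, the law of cosines gives $l_{jk}^{2}=l_{ij}^{2}+l_{ik}^{2}-2\,l_{ij}l_{ik}\cos\theta$, so the target inequality $l_{jk}<l_{ik}/2$ is equivalent to
\[
\cos\theta \;>\; \frac{l_{ij}}{2\,l_{ik}}+\frac{3\,l_{ik}}{8\,l_{ij}}.
\]
Putting $t:=l_{ij}/l_{ik}$, the ratio hypothesis forces $t\in(\alpha,1]$ (note this also rules out any point of $P$ coinciding with $i$), and the right-hand side is $g(t):=t/2+3/(8t)$.

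Next I would analyze $g$ on $[\alpha,1]$. Since $g'(t)=1/2-3/(8t^{2})$ vanishes only at $t=\sqrt3/2$, with $g$ decreasing before and increasing after, the maximum of $g$ over $[\alpha,1]$ is attained at an endpoint. Here $g(1)=7/8$, while $g(\alpha)=\alpha/2+3/(8\alpha)$; and the function $\alpha\mapsto\alpha/2+3/(8\alpha)$ is decreasing on $(1/2,\sqrt3/2)$ with value $7/8$ at $\alpha=3/4$, so for $\alpha\in(1/2,3/4]$ we have $g(\alpha)\ge 7/8=g(1)$, hence $g(t)\le \alpha/2+3/(8\alpha)$ for all $t\in[\alpha,1]$. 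The same monotonicity shows $\alpha/2+3/(8\alpha)<1$ on $(1/2,3/4]$ (it approaches $1$ only as $\alpha\to1/2$), so $\theta^{*}:=\arccos\!\big(\alpha/2+3/(8\alpha)\big)$ is a well-defined positive angle and the displayed inequality holds whenever $\theta<\theta^{*}$.

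It remains to find a pair $j,k\in P$ with $\angle jik<\theta^{*}$. Each point of $P$ determines a direction from $i$, i.e.\ an angle in $[0,2\pi)$; sorting these $|P|$ directions cyclically produces $|P|$ consecutive gaps summing to $2\pi$, so the smallest gap is at most $2\pi/|P|$. Since $|P|>2\pi/\arccos(\alpha/2+3/(8\alpha))=2\pi/\theta^{*}$, this smallest gap is strictly less than $\theta^{*}$; the two points $j,k$ realizing it satisfy $\angle jik<\theta^{*}$ (the gap being below $\theta^{*}\le\pi/2$), and the previous paragraph gives $l_{jk}<l_{ik}/2=\max\{l_{ij},l_{ik}\}/2$.

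All the computations are elementary; the only step needing care is the second one — checking that $g(t)=t/2+3/(8t)$ never exceeds $\alpha/2+3/(8\alpha)$ over the whole range $t\in(\alpha,1]$ (this is exactly where the assumption $\alpha\le 3/4$ enters, and it is what pins down the constant in the angular threshold), together with confirming that this quantity stays strictly below $1$ so that $\theta^{*}$, and hence the bound $2\pi/\theta^{*}$ in the hypothesis, is finite and meaningful.
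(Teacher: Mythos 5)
Your proof is correct and follows essentially the same route as the paper's: a pigeonhole argument on the angular directions from $i$ to produce a pair with $\angle jik$ below the threshold, then the law of cosines together with maximizing the ratio function $t/2+3/(8t)$ over $(\alpha,1]$ (the paper works with the equivalent $4\beta+3/\beta$ and the reversed convention $l_{ij}\ge l_{ik}$). Your version is merely more explicit about the endpoint comparison at $t=\alpha$ versus $t=1$ and about why the $\arccos$ threshold is well-defined.
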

\begin{proof}
 Since $|P| > 2\pi/\arccos (\alpha/2 + 3/(8\alpha))$, there exist $j,k \in P$ such that $\theta := \angle
 jik < \arccos (\alpha/2 + 3/(8\alpha))$.
 We note that $l^2_{jk}=l^2_{ij} + l^2_{ik} -2 l_{ij} l_{ik} \cos
 \theta$.
 Without loss of generality, we assume $l_{ij} \geq l_{ik}$.
 Then, $(\max\{l_{ij},l_{ik}\})^2=l^2_{ij}$.
 Hence it suffices to show that
 $-4l^2_{ik}- 3l^2_{ij} + 8l_{ij}l_{ik} \cos \theta > 0$.

  Let $\beta:=l_{ik}/l_{ij}$. Then, $\alpha < \beta \leq 1$ holds.
$\sup_{\beta:\alpha < \beta \leq 1}4\beta +
 3/\beta = 4\alpha+3/\alpha$ holds.
 Hence the required inequality is verified by
 \begin{align*}
  -4l^2_{ik}- 3l^2_{ij} + 8l_{ij}l_{ik} \cos \theta
  & = l_{ij}l_{ik} \left(
  -4 \beta - \frac{3}{\beta} +8 \cos\theta
  \right)\\
  & \geq l_{ij}l_{ik} \left(
  -4 \alpha - \frac{3}{\alpha} +8 \cos\theta
  \right)\\
  & > 0.
 \end{align*}
\end{proof}

Our reduction requires the assumption that
there is an optimal solution $(F,U)$ for the \vcst
such that
the degree of each vertex $v \in U$ is bounded by a constant $\alpha$
in the tree $F-(L(F)\setminus U)$.
The following lemma proves that
the assumption holds with $\alpha=29$ if the input graph is a unit disk graph.

 \begin{lemma}\label{lem.constant-degree}
  If the input graph $G=(V,E)$ is a unit disk graph,
  the \vcst
  admits an optimal solution consisting of a Steiner tree $F$
  and a vertex cover $U$ of $F$ such that
  the degree of each vertex in $U$ is at most $29$ in
  $F - (L(F)\setminus U)$.
 \end{lemma}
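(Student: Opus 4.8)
The plan is to take an optimal solution and repeatedly reroute it until its high-degree vertices are tamed. Among all optimal solutions $(F,U)$ — pairs of a Steiner tree $F$ for $T$ and a vertex cover $U$ of $F$ with $w(U)$ minimum — I would fix one in which $U$ is an inclusion-minimal such cover and, subject to that, $F$ has minimum total Euclidean edge length $\sum_{e\in E(F)}l_e$, with remaining ties broken by $|V(F)|$. A convenient first reduction is a reformulation of the goal: since $U$ is inclusion-minimal, a leaf of $F$ whose unique neighbour lies in $U$ is itself outside $U$, so for $v\in U$ the neighbours of $v$ in $F-(L(F)\setminus U)$ are exactly the neighbours of $v$ in $F$ that are \emph{not} leaves of $F$. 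Hence it suffices to prove that each $v\in U$ has at most $29$ non-leaf neighbours in $F$.

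Suppose, for contradiction, that some $v\in U$ has at least $30$ non-leaf neighbours $u_1,\dots,u_d$ in $F$. Fix a geometric representation and identify $v$ with a point $i$; each $u_\ell$ then lies within Euclidean distance $1$ of $i$, and no two of the $u_\ell$ are adjacent in $F$, so they occupy distinct components of $F-v$. I would partition the $u_\ell$ according to their distance to $i$ into a bounded number of annuli on which the distance ratios $l_{iu_k}/l_{iu_j}$ stay inside the range required by Lemma~\ref{lem.geometry} for a suitable $\alpha\in(1/2,3/4]$ (concretely $(1/2,2/3]$ with $\alpha=3/4$ and $(2/3,1]$ with $\alpha=2/3$), plus a residual family of $u_\ell$ lying very close to $v$. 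On each annulus, once it contains more than $2\pi/\arccos(\alpha/2+3/(8\alpha))$ (about $13$) of the $u_\ell$, the lemma produces two of them, $u_j$ and $u_k$, with $l_{u_ju_k}<\max\{l_{iu_j},l_{iu_k}\}/2<1$; in particular $u_j u_k$ is an edge of $G$.

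The rerouting move for such a close pair is to delete from $F$ whichever of $vu_j,vu_k$ is the longer and add the edge $u_ju_k$. The result is again a Steiner tree on the same vertex set — the subtree that detaches reattaches through the other of $u_j,u_k$, which stays on $v$'s side — it has strictly smaller total length, and if $u_ju_k$ is covered by $U$ it is an optimal solution violating our extremal choice. Since $u_ju_k$ is covered exactly when at least one of $u_j,u_k$ lies in $U$, this already bounds by about $13$ the number of $u_\ell$ \emph{that lie in $U$} in each annulus; summing these contributions with those of the neighbours outside $U$ and of the residual family is what should produce the total $29$.

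The step I expect to be the genuine obstacle is handling a close pair $u_j,u_k$ with \emph{both} outside $U$ (and, similarly, the neighbours lying very close to $v$). Here adding $u_ju_k$ leaves that edge uncovered, and because every edge of $G$ has Euclidean length at most $1$ one cannot in general reroute through a $U$-neighbour $a$ of $u_j$ (the triangle inequality only gives $l_{au_k}\le 3/2$) nor contract $u_j$ away (its two ``sides'' may be at mutual distance up to $2$). The plan is to exploit that each neighbour of $v$ that is outside $U$ is a non-leaf all of whose $F$-neighbours lie in $U$ — so these form an independent set in $F$ — and to reroute using an edge incident to such a $U$-vertex while simultaneously deleting a vertex, invoking the minimality of $|V(F)|$ rather than of the length to obtain the contradiction; the residual family is treated by the same mechanism, with the extra $1/2$ of slack coming from $l_{vu_\ell}\le 1/2$ there. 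Carrying out these reroutings so that feasibility, the cover, and the objective are all preserved, while keeping the resulting constant equal to $29$, is the crux of the argument.
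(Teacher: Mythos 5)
Your proposal correctly sets up the easy half of the argument (length\--decreasing reroutes bound the neighbours of $v$ that lie in $U$), but it stops exactly where the real work begins: you name the case of a close pair $u_j,u_k$ with both vertices outside $U$ as ``the crux'' and leave it as a plan rather than a proof. That case cannot be dispatched by the mechanism you sketch. Two non\--leaf neighbours of $v$ outside $U$ can be placed arbitrarily close to each other (so no annulus decomposition of the $u_\ell$ themselves terminates --- points at distances $\epsilon$ and $\epsilon^2$ from $v$ have unbounded ratio, so Lemma~\ref{lem.geometry} never applies to your ``residual family''), and when you try to merge or delete one of them, its other tree\--neighbours need not be adjacent in $G$ to anything useful, since all you get from the triangle inequality is a distance bound of $1+l_{u_ju_k}$, which does not put them inside a unit disk around the surviving vertex. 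Minimizing $|V(F)|$ does not rescue this: the obstruction is geometric (edges of length more than $1$ do not exist in $G$), not combinatorial.

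The paper's proof resolves this with a two\--level argument that your proposal is missing. The primary extremal criterion is not minimality of $U$ but minimality of the number of \emph{inner nodes}, i.e.\ vertices of $F$ that are neither leaves nor in $U$. For a non\--leaf neighbour $u\notin U$ of $v$, every other tree\--neighbour of $u$ lies in $U$ (since $U$ covers $F$). If all of them are within distance $1$ of $v$, one reattaches all of them directly to $v$; this keeps $F$ a Steiner tree covered by $U$ and turns $u$ into a leaf (or removes it), strictly decreasing the number of inner nodes --- contradiction. Hence each such $u$ has a witness neighbour $u'\in U$ with $l_{vu'}>1$, and distinct $u$'s have distinct witnesses (a shared witness would create a cycle in $F$). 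These witnesses all lie in $U$ and in the annulus $(1,2]$ around $v$, so close pairs among them \emph{can} be rerouted (the new edge is covered and length drops), and Lemma~\ref{lem.geometry} applied to the two sub\--annuli $(1,1.41]$ and $(1.41,2]$ bounds their number by $12+12=24$. The neighbours of $v$ inside $U$ are bounded separately by $5$ via a direct hexagonal\--packing argument (seven points within the unit disk around $v$ force an angle below $\pi/3$, hence a length\--decreasing covered swap; the boundary case of exactly six is eliminated by a tertiary tie\--breaking condition), giving $5+24=29$. Without the inner\--node criterion and the passage to the second\--level witnesses, your argument cannot bound the non\--$U$ neighbours at all, so the proof as proposed is incomplete.
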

\begin{proof}
 For two vertices $u,v\in V$, 
 let $l_{uv}$ denote the Euclidean distance between $u$ and $v$
 in the geometric representation of $G$.
 Let $(F,U)$ be an optimal solution for the \vcst.
 We call 
 each node in $V(F)\setminus (L(F)\cup U)$
 an \emph{inner node of $(F,U)$}.
 Without loss of generality, we can assume that $(F,U)$ satisfies the
 following conditions:
 \begin{itemize}
  \item[(a)] $(F,U)$ minimizes the number of inner nodes over all optimal
	     solutions;
  \item[(b)]
	    $F$ minimizes $\sum_{e \in F}l_e$ over all optimal solutions
	    subject to (a);
  \item[(c)] $(F,U)$ minimizes
	     the number of vertices $v \in U$
	     such that $|\{u \in U \colon uv \in F\}|\geq 6$ over all
	     optimal solutions subject to (a) and (b).
 \end{itemize}
 Let $v \in U$.
 Let $M_v:=\{u \in U \colon uv \in F\}$ and $M'_v:=\{u \in
 V(F)\setminus (U \cup L(F)) \colon uv \in F\}$.
 We prove the lemma by showing that $|M_v|\leq 5$ and $|M'_v|\leq 24$.

 We first show that $|M_v|\leq 5$. Suppose that there are two distinct vertices $i,j \in M_v$ such that 
 $l_{ij} < \max\{l_{vi}, l_{vj}\}$.
 Without loss of generality, let  $l_{vi}=\max\{l_{vi}, l_{vj}\}$,
 and denote $F \setminus \{vi\} \cup \{ij\}$ by $F'$.
 Then $F'$ is a Steiner tree and 
 $U$ is a vertex cover of $F'$.
 That is to say, $F'$ is another optimal solution for the problem.
 Moreover, $(F',U)$ has the same set of inner nodes as $(F,U)$, 
 and $\sum_{e \in F'}l_e < \sum_{e \in F}l_e$.
 Since the existence of such an optimal solution contradicts condition (b),
 $M_v$ contains no such vertices $i$ and $j$.

 If $|M_v| \geq 7$, there must be two vertices $i,j \in M_v$
 such that $\angle ivj < \pi/3$, and
 $l_{ij} < \max\{l_{vi}, l_{vj}\}$ holds 
 for these vertices. 
 Hence $|M_v| \leq 6$ holds.
 Suppose that $|M_v| =6$.
 In this case,
 $M_v=\{u_1,\ldots,u_6\}$, and 
 $l_{vu_k}=l_{vu_{k+1}}=l_{u_k u_{k+1}}$ holds for all
 $k=1,\ldots,6$, where, for notational convenience, we let $u_{7}$ denote $u_1$.
 If $|M_{u_1}|\leq 4$,
 we define $F'$ as $F \setminus \{vu_2\} \cup \{u_1u_2\}$.
 Then, $(F',U)$ is another optimal solution
 that has the same inner node set as $(F,U)$,
 and $\sum_{e \in F'}l_e = \sum_{e \in F}l_e$.
 Replacing $F$ by $F'$
 decreases the number of vertices $v \in U$
 such that $|M_v|\geq 6$, which contradicts condition (c).
 If $|M_{u_1}|\geq 5$, then (i) there exist $i,j \in M_{u_1}\setminus v$
 such that $l_{ij} < \max\{l_{u_1 i}, l_{u_1j}\}$,
 or (ii) there exist $i \in M_{u_1} \setminus v$
 and $j \in \{v,u_2,u_6\}$
 such that
 $l_{ij} < \max\{l_{u_1 i}, l_{u_1j}\}$.
 Case (i) contradicts condition (b), as observed above.
 In case (ii),
 we define $F'$ as $F \setminus \{u_1 i\} \cup \{ij\}$
 if $\max\{l_{u_1 i}, l_{u_1j}\}=l_{u_1 i}$,
and as $F \setminus \{u_1 v\} \cup \{ij\}$
 if $\max\{l_{u_1 i}, l_{u_1j}\}=l_{u_1 j}$.
 In either case, $(F',U)$ is another optimal solution
 that has the same inner node set as $(F,U)$,
 and $\sum_{e \in F}l_e >\sum_{e \in F'}l_e$. Since this contradicts 
 condition (b), $|M_v| \leq 5$ holds.
 
 \begin{figure}\centering
  \includegraphics[scale=.9]{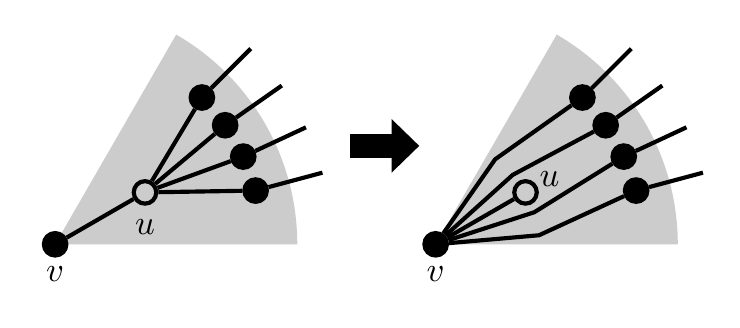}
  \caption{Transformation of $F$ when $l_{vu'}\leq 1$ for all $u' \in A_u$ and $u \in T$}
  \label{fig.proof1}
 \end{figure}

 Next, we prove $|M'_v|\leq 24$. Let $u \in M'_v$.
 Since $u$ is not a leaf, $u$ has a neighbor other
 than $v$. We denote by $A_u$ the set of neighbors of $u$ other than $v$.
 Since $u \not\in U$, each vertex in $A_u$ is included in $U$.
 If
 $l_{v u'}  \leq 1$ holds
 for all vertices $u' \in A_u$,
 consider
 $F'$ defined 
 as $(F \setminus \{uu' \colon u' \in A_u\}) \cup \{vu' \colon u'\in A_u\}$
 when $u$ is a terminal (see Figure~\ref{fig.proof1}),
 and 
 as $(F \setminus (\{vu\}\cup\{uu' \colon u' \in A_u\})) \cup \{vu' \colon u'\in A_u\}$
 otherwise.
 Then, $F'$ is a Steiner tree and $U$ is a vertex cover of $F'$,
 and hence $(F',U)$ is another optimal solution for the problem.
 Moreover,
 all inner nodes of $(F',U)$ are also inner nodes of $(F,U)$,
 and $u$ is not inner nodes of $(F',U)$ but $(F,U)$.
 This means that $(F',U)$ has fewer inner nodes than $(F,U)$.
 Since the existence of such an optimal solution contradicts condition (a),
 there is at least one vertex $u' \in A_u$ with $l_{vu'} > 1$.
 We choose one of these vertices for each $u \in M'_v$,
 and let $B$ denote the set of those chosen vertices
 (hence $B$ includes exactly one vertex in $A_u$ for each $u \in M'_v$).

   \begin{figure}[t]\centering
    \subfloat[$l_{vu}\geq l_{vi}/2$]{
     \includegraphics[scale=.6]{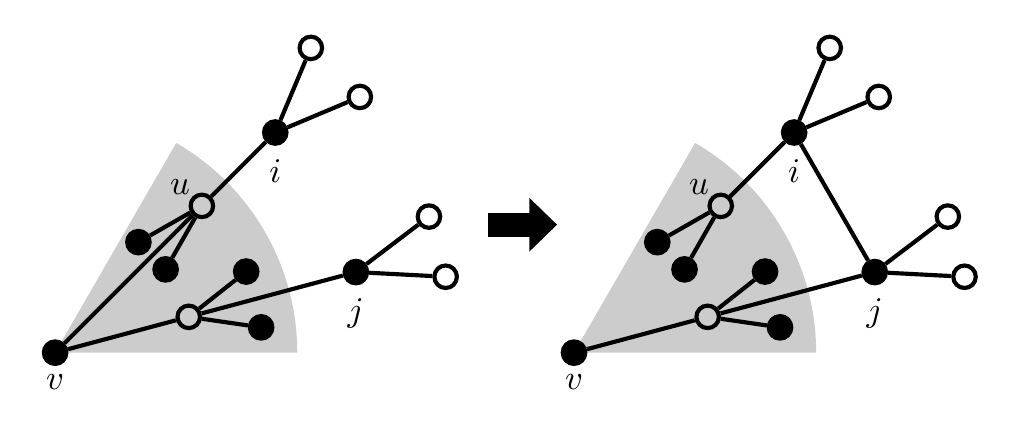}
    }
    \subfloat[$l_{ui}\geq l_{vi}/2$]{
     \includegraphics[scale=.6]{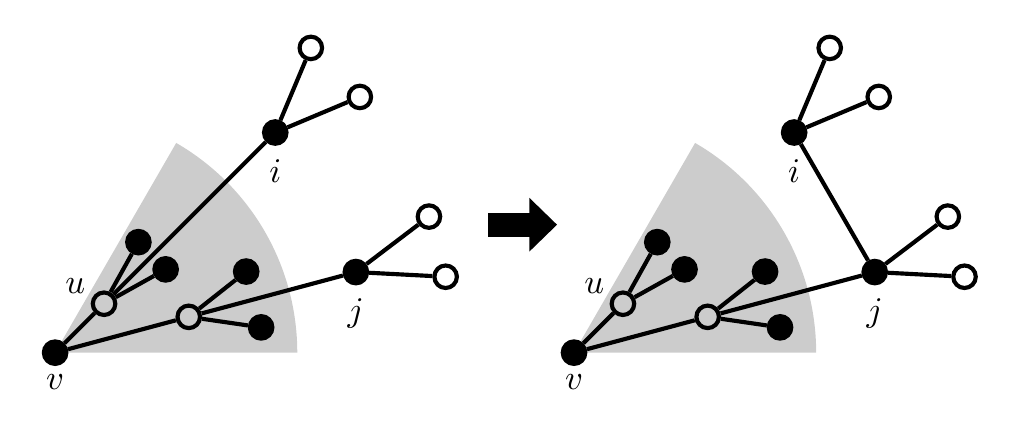}
    }
    \caption{Transformation of a tree $F$ when $l_{ij}<\max \{l_{vi},l_{vj}\}/2$}
   \label{fig.proof2}
 \end{figure} 

 Suppose there exist two vertices $i,j \in B$ such that $l_{ij} < \max\{l_{vi},l_{vj}\}/2$.
 Let $l_{vi}=\max\{l_{vi},l_{vj}\}$.
 Let $u$ denote the common neighbor of $v$ and $i$.
 Then, $l_{vu}$ or $l_{ui}$ is at least $l_{vi}/2$.
 If $l_{vu} \geq l_{vi}/2$, then replace edge $vu$ by $ij$ in $F$
 (see Figure~\ref{fig.proof2}(a)).
 Otherwise, replace edge $ui$ by $ij$ in $F$
 (see Figure~\ref{fig.proof2}(b)).
 Let $F'$ denote the tree obtained by this replacement.
 Then $(F',U)$ is another optimal solution,
 all inner nodes of $(F',U)$ are also inner nodes of $(F,U)$
 ($u$ is an inner node of $(F,U)$, but it may not be an inner node of $(F',U)$),
 and 
 $\sum_{e\in F'}l_{e} < \sum_{e \in F}l_e$
 holds.
 Since this contradicts condition (a) or (b), there exists no such pair
 of vertices $i,j\in B$.

 We divide $B$ into $B':=\{i \in B \mid l_{vi} \leq 1.41\}$
 and  $B'':=\{i \in B \mid  1.41 < l_{vi}\}$.
 Notice that $1/1.41 \leq l_{vi}/l_{vj}\leq 1.41$
 holds for any $i,j \in B'$.
 Hence, by Lemma~\ref{lem.geometry},
 $|B'|\leq \lfloor 2\pi/\arccos(1/2.82+4.23/8) \rfloor =12$.
 Moreover, $1.41/2 \leq l_{vi}/l_{vj}\leq 2/1.41$
 holds for any $i,j \in B''$.
 Hence, by Lemma~\ref{lem.geometry},
 $|B''|\leq \lfloor 2\pi/\arccos(1.41/4 + 3/5.64) \rfloor =12$.
 Since $|M_v'| \leq |B|=|B'| + |B''|\leq 24$, this proves the lemma.
\end{proof}

In the remainder of this subsection, we assume that $G$ is not necessarily
a unit disk graph, but there is an optimal solution $(F,U)$ for the \vcst
such that the degree of each vertex $v \in U$ is at most a constant
$\alpha$
in the tree $F - (L(F)\setminus U)$.
Based on this assumption,
we reduce the \vcst to another optimization problem.
First, let us define the problem used in the reduction.

  \begin{definition}[Connected dominating set problem]
   Let $G=(V,E)$ be an undirected graph, and let $T \subseteq V$ be a set of
  terminals.
  Each edge $e$ is associated with the length $l(e) \in \Rset_+$,
   each vertex $v$ is associated with the weight
   $w(v) \in \Rset_+$,
   and $l(e) \leq \min\{w(u),w(v)\}$ holds for each edge $e=uv \in E$.
  The problem seeks a pair of a tree $F \subseteq E$ and a vertex
   set $S \subseteq V$ such that
   $S$ dominates $T$ and $F$ spans $S$.
   Let $l(F)$ denote
   $\sum_{e \in F}l(e)$.
   The objective is to minimize $w(S) + l(F)$.
  \end{definition}

  We note that there are several previous studies of the connected dominating set problem~\cite{GuhaK98,ChengHLWD03,AmbuhlEMN06,ZouWXLDWW11}. However,
  the algorithms in those studies do not apply to our setting
  because they consider only the case $T=V$.

   \begin{theorem}
    \label{thm.reduction-unit}
    Suppose that there exists an optimal solution $(F,U)$ for the VC-weighted Steiner tree
    problem such that each node in $U$
    has a degree at most $\alpha$ on the tree $F$.
    If there is a $\beta$-approximation algorithm for the connected
    dominating set problem in a graph $G$, then there is an $(\alpha+1)\beta$-approximation
   algorithm for the \vcst with input graph $G$.
   \end{theorem}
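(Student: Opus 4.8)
The plan is to prove the theorem by turning the assumed $\beta$-approximation for the connected dominating set problem into a $\beta(\alpha+1)$-approximation for the \vcst via the following three-stage algorithm on the input graph $G$: (i) build a connected dominating set instance on the \emph{same} graph $G$, with the same terminal set $T$ and the same weights $w$, and with edge lengths $l(uv):=\min\{w(u),w(v)\}$ (this trivially meets the requirement $l(uv)\le\min\{w(u),w(v)\}$); (ii) run the given $\beta$-approximation algorithm on it to obtain a feasible pair $(F,S)$; (iii) convert $(F,S)$ back into a Steiner tree together with a vertex cover. To make this work I have to establish two inequalities: that the optimum $\mathrm{OPT}_{\mathrm{CDS}}$ of the constructed instance is at most $(\alpha+1)\,\mathrm{OPT}_{\mathrm{VCST}}$, and that the conversion in step (iii) never increases the objective.

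For the first inequality I would take the optimal solution $(F^*,U^*)$ furnished by the hypothesis (in the form stated just before the theorem: every vertex of $U^*$ has degree at most $\alpha$ in $F^*-(L(F^*)\setminus U^*)$) and use $(F'',U^*)$ with $F'':=F^*-(L(F^*)\setminus U^*)$ as a candidate connected dominating set solution. It is feasible: $F''$ is a tree containing $U^*$, and $U^*$ dominates $T$ because every terminal lies in $V(F^*)$ and, being covered by $U^*$ in $F^*$, is either in $U^*$ or adjacent in $F^*$ to a vertex of $U^*$. Its cost is $w(U^*)+l(F'')$, and since every edge of $F''$ has at least one endpoint in $U^*$, charging each edge to such an endpoint yields $l(F'')=\sum_{e\in F''}\min\{w(u),w(v)\}\le\sum_{v\in U^*}\deg_{F''}(v)\,w(v)\le\alpha\,w(U^*)$; hence the cost is at most $(\alpha+1)\,w(U^*)=(\alpha+1)\,\mathrm{OPT}_{\mathrm{VCST}}$. (The degenerate case where $F^*$ has no edge, i.e.\ $U^*=\emptyset$, means $|T|\le 1$ and is handled trivially and separately.)

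For the conversion, given a feasible $(F,S)$ I would first form a Steiner tree $\hat F$ by attaching, for each terminal $t\in T\setminus V(F)$, a pendant edge $ts_t$ where $s_t\in S$ is a neighbour of $t$ (one exists because $S$ dominates $T$); this keeps $\hat F$ a tree and puts all terminals into it. For the vertex cover I would \emph{not} use $S$ on $F$, but instead let $\hat U_0$ contain, for each edge $uv$ of $F$, one lighter endpoint (a vertex achieving $\min\{w(u),w(v)\}$). Then $\hat U_0$ is a vertex cover of $F$, and since each $v\in\hat U_0$ is the chosen lighter endpoint of at least one edge $e$ with $l(e)=w(v)$, we get $w(\hat U_0)=\sum_{v\in\hat U_0}w(v)\le\sum_{e\in F}l(e)=l(F)$. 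Taking $\hat U:=\hat U_0\cup\{s_t:t\in T\setminus V(F)\}$ gives a vertex cover of $\hat F$ of weight at most $l(F)+w(S)$, i.e.\ at most the cost of $(F,S)$. Combining the three facts, the algorithm outputs a solution of weight at most $\beta\cdot\mathrm{OPT}_{\mathrm{CDS}}\le\beta(\alpha+1)\,\mathrm{OPT}_{\mathrm{VCST}}$, and each step is polynomial given the $\beta$-approximation algorithm.

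The step I expect to be the crux — and the one that explains why the constraint $l(uv)\le\min\{w(u),w(v)\}$ sits in the definition of the connected dominating set problem — is the vertex-cover construction in the conversion: paying for a vertex cover of the connecting tree $F$ out of its own length budget $l(F)$, through the "pick the lighter endpoint of every edge" charging, is precisely what makes the reduction lossless in that direction. By contrast, the degree-bounded-optimum hypothesis is needed only for the reverse inequality $\mathrm{OPT}_{\mathrm{CDS}}\le(\alpha+1)\,\mathrm{OPT}_{\mathrm{VCST}}$, where it enters through an entirely routine charging argument.
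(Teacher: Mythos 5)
Your proposal is correct and follows essentially the same route as the paper's own proof: the same instance construction with $l(uv)=\min\{w(u),w(v)\}$, the same charging of $l(F'')$ to degrees of vertices in $U^*$ to get the $(\alpha+1)$ factor, and the same ``pick a lighter endpoint of each edge'' conversion back to a vertex cover (the paper simply adds all of $S$ to the cover rather than only the vertices $s_t$, which changes nothing in the bound). The only cosmetic difference is that you explicitly treat the degenerate case and note that the degree bound is taken in $F^*-(L(F^*)\setminus U^*)$, as in the paragraph preceding the theorem.
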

  \begin{proof}
   Suppose that an instance $I$ of the \vcst
   consists of an undirected graph $G=(V,E)$,
   a terminal set $T \subseteq V$, and vertex weights $w\in \Rset_+^V$.
   We define the edge length $l(e)$ as $\min\{w(u),w(v)\}$ for each $e
   =uv \in E$,
   and define an instance $I'$ of the connected dominating set 
   problem
   from $G$, $T$, $w$, and $l$.
   We show that the optimal objective value of $I'$ is at most
   $\alpha+1$
   times that of $I$,
   and a feasible solution for $I$ can be constructed from the one for
   $I'$
   without increasing the objective value.
   Combined with the $\beta$-approximation algorithm for $I'$,
   these claims give an $(\alpha+1)\beta$-approximation
   algorithm for $I$.

   First, we prove that the optimal objective value of $I'$ is at most
   $\alpha+1$ times that of $I$.
   Let $(F,U)$ be an optimal solution for $I$.
   Then, the optimal objective value of $I$ is $w(U)$.
   Since $F$ spans $T$ and $U$ is a vertex cover of $F$,
   $U$ dominates $T$.
   Define $F':= F - (L(F)\setminus U)$.
   Since $F'$ is a tree spanning $U$, $(F',U)$ is a feasible solution for
   $I'$.
   If $e=uv \in F'$, then $u$ or $v$ is included in $U$,
   and $l(e)$ is at most $w(u)$ and $w(v)$.
   Hence $l(F') \leq \sum_{v \in U}w(v) d_{F'}(v)$.
   By assumption,
   $d_{F'}(v)\leq \alpha$ holds for each $v \in U$.
   Hence $l(F') \leq \alpha w(U)$.
   Since the objective value of $(F',U)$ in $I'$ is $l(F')+w(U)$,
   the optimal objective value of $I'$ is at most $(\alpha+1)w(U)$.

   Next, we prove that a feasible solution $(F,S)$ for $I'$ provides a
   feasible solution for $I$, and its
   objective value is at most that of $(F,S)$.
      Since $S$ dominates $T$, if a terminal $t \in T$ is not spanned by
   $F$,
   there is a vertex $v \in S$ with $tv \in E$.
   We let $F'$ be the set of such edges $tv$.
   Notice that $F \cup F'$ is a Steiner tree of the terminal set $T$.
   For each edge $e \in F$, choose an end vertex $v$ of $e$
   such that $l(e)=w(v)$.
   Let $S'$ denote this set of chosen vertices.
   Then, $S' \cup S$ is a vertex cover of $F \cup F'$.
   Hence $(F \cup F',S'\cup S)$ is feasible for $I$.
   Since $w(S'\cup S)\leq w(S)+l(F)$, the objective value of
   $(F \cup F',S'\cup S)$ is at most that of $(F,S)$.
  \end{proof}

\subsection{Algorithm for the connected dominating set problem with $T=V$}
\label{sec.spanning}

In the remainder of this section,
we present algorithms for the connected dominating set problem.
As a warm-up, we will first discuss the case $T=V$, which arises in the
reduction from the \vcsp.
We show that the problem admits a simple constant-factor approximation
algorithm for any
graphs in which the minimum-weight dominating set problem admits a constant-factor
approximation.
This class includes unit disk graphs~\cite{ZouWXLDWW11,FonsecaSF14}.
Below, we let $\beta$ denote the approximation factor for the minimum-weight
dominating set problem.

 Our algorithm first computes a $\beta$-approximate solution $S$ of the minimum
  dominating set of the graph.
  Then, it computes a $\beta'$-approximation of the minimum edge-weighted Steiner tree that spans $S$.
    Let $F$ denote the computed Steiner tree.
   Then, $(F,S)$ is our approximate solution for the connected
   dominating set problem.
   
    \begin{theorem}
     \label{thm.spanning}
     Suppose that
      the minimum-weight dominating set problem
     admits a $\beta$-approximation algorithm
     and the minimum edge-weighted Steiner tree problem
     admits a $\beta'$-approximation algorithm
     for a graph $G$.
    The there exists a $\beta (\beta'+1)$-approximation algorithm for the
    connected dominating set problem with the graph $G$.
    \end{theorem}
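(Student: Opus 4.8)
The plan is to track the cost of the algorithm's output $(F,S)$ against a fixed optimal solution $(F^\ast,S^\ast)$ of the connected dominating set instance, writing $\mathrm{OPT}=w(S^\ast)+l(F^\ast)$. Because $T=V$, every feasible vertex set (in particular $S^\ast$) is a dominating set of $G$, so the first phase is easy to account for: the minimum-weight dominating set has weight at most $w(S^\ast)\le\mathrm{OPT}$, whence the $\beta$-approximate set $S$ returned in the first phase satisfies $w(S)\le\beta\,w(S^\ast)$. It is also immediate that $(F,S)$ is feasible, since $S$ dominates $T=V$ and $F$ spans $S$ by construction.

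The crux is to bound the length of a minimum edge-weighted Steiner tree spanning $S$ in terms of $l(F^\ast)$ and $w(S)$, so that the $\beta'$-guarantee of the second phase can be applied. I would do this by exhibiting one such tree explicitly. Begin with $F^\ast$, which already spans $S^\ast\subseteq V(F^\ast)$. For each $u\in S\setminus V(F^\ast)$, use the fact that $S^\ast$ dominates $V$ to choose a neighbor $\phi(u)\in S^\ast$, and add the edge $u\phi(u)$. Since each such $u$ is a brand-new vertex attached by a single edge to the tree $F^\ast$, the resulting graph $H$ is again a tree; its vertex set contains $V(F^\ast)\cup S\supseteq S$; and, invoking the defining inequality $l(e)\le\min\{w(x),w(y)\}$ for every edge $e=xy$, we get $l(H)\le l(F^\ast)+\sum_{u\in S\setminus V(F^\ast)}w(u)\le l(F^\ast)+w(S)$. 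Hence the minimum Steiner tree spanning $S$ has length at most $l(F^\ast)+w(S)$, and the computed tree $F$ obeys $l(F)\le\beta'\bigl(l(F^\ast)+w(S)\bigr)$.

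Combining the two estimates, the output cost is
\[
w(S)+l(F)\;\le\;(\beta'+1)\,w(S)+\beta'\,l(F^\ast)\;\le\;\beta(\beta'+1)\,w(S^\ast)+\beta'\,l(F^\ast),
\]
and since $\beta\ge1$ the right-hand side is at most $\beta(\beta'+1)\bigl(w(S^\ast)+l(F^\ast)\bigr)=\beta(\beta'+1)\,\mathrm{OPT}$, which is the claimed approximation bound.

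The only place where genuine care is needed — and the only use of the problem's special structure — is the construction of $H$: one must verify that each $u\in S\setminus V(F^\ast)$ indeed has a neighbor in $S^\ast$ (this is precisely domination of $V$ by $S^\ast$, which is exactly where the hypothesis $T=V$ enters), that adding these edges introduces no cycle (each new $u$ becomes a leaf), and that the added edge lengths are dominated by the vertex weights via $l(e)\le\min\{w(x),w(y)\}$. Everything else is bookkeeping with the inequalities $w(S)\le\beta\,w(S^\ast)$ and $l(F)\le\beta'\bigl(l(F^\ast)+w(S)\bigr)$, using only $\beta,\beta'\ge1$.
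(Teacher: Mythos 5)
Your proof is correct and follows essentially the same route as the paper's: bound $w(S)\le\beta w(S^\ast)$, augment $F^\ast$ with one edge per vertex of $S$ not already spanned (using domination by $S^\ast$ and $l(e)\le\min\{w(u),w(v)\}$ to charge these edges to $w(S)$), and then apply the $\beta'$-guarantee. The only cosmetic difference is that you attach the vertices of $S\setminus V(F^\ast)$ rather than $S\setminus S^\ast$, which is an equivalent bookkeeping choice.
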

  \begin{proof}
   Let $(F,S)$ denote a solution output by the algorithm,
   and let $(F^*,S^*)$ be an optimal solution for the problem.
   Since $S^*$ is a dominating set of $G$, $w(S)\leq \beta w(S^*)$ holds
   by the definition of $S$.
   Also, $F^*$ spans $S^*$. Hence, if a vertex in $S$ is included in $S^*$,
   then this vertex is spanned by $F^*$.

   Let $v \in S \setminus S^*$. Since $S^*$ is a dominating set of $G$,
   it includes a vertex  adjacent to $v$. Call such a vertex $v'$.
   If $v$ has more than one neighbor in $S^*$, choose one of them
   arbitrarily and call it $v'$.
   Let $F' = \{v v' \mid v \in S\setminus S^*\}$.
   Notice that all edges in $F'$ are incident to some vertex in $S$,
   and the degree of each vertex in $S$ is at most one in $F'$.
   Hence $l(F')\leq w(S)$.

   $F^* \cup F'$ is a connected subgraph that spans $S$.
   Therefore, $l(F) \leq \beta' l(F^* \cup F')\leq
   \beta' (l(F^*)+w(S))$.
   The objective value of the solution $(F,S)$
   is $l(F)+w(S) \leq \beta' l(F^*)+ (\beta'+1)w(S)\leq   \beta' l(F^*)+
   (\beta'+1)\beta w(S^*) \leq \beta(\beta'+1) (l(F^*)+w(S^*))$.
   Therefore, the approximation factor of $(F,S)$ is at most $\beta (\beta'+1)$.
  \end{proof}

  As mentioned above, in a unit disk graph, $\beta$ is a constant.
  The edge-weighted Steiner tree problem admits a constant-factor
  approximation algorithm for any graphs~\cite{GoemansORZ12,ByrkaGRS13}.
  Hence Theorem~\ref{thm.spanning} provides the following corollary.
  
  \begin{corollary}
   The \vcsp admits a constant-factor approximation algorithm in unit disk graphs.
  \end{corollary}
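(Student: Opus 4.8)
The plan is to chain the two reductions already established in this section. Given an instance of the \vcsp on a unit disk graph $G=(V,E)$ with vertex weights $w$, the first step is to invoke Lemma~\ref{lem.constant-degree}: it guarantees an optimal solution $(F,U)$ in which every vertex of $U$ has degree at most $29$ in $F-(L(F)\setminus U)$. With $\alpha=29$ the hypothesis of Theorem~\ref{thm.reduction-unit} is met, so a $\beta$-approximation algorithm for the connected dominating set problem on $G$ yields an $(\alpha+1)\beta=30\beta$-approximation for the \vcsp. The key point to notice is that the reduction in Theorem~\ref{thm.reduction-unit} leaves the graph $G$ untouched — it only equips each edge $uv$ with the length $\min\{w(u),w(v)\}$ — so the connected dominating set instance it produces is again on the unit disk graph $G$, and since $T=V$ in the spanning tree case, the terminal set of that instance is also all of $V$.

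The second step is to approximate this connected dominating set instance with $T=V$. Here I would apply Theorem~\ref{thm.spanning}, whose only requirements are a constant-factor approximation for the minimum-weight dominating set problem on $G$ and a constant-factor approximation for the minimum edge-weighted Steiner tree problem on $G$. Both are available: the minimum-weight dominating set problem admits a constant-factor approximation on unit disk graphs~\cite{ZouWXLDWW11,FonsecaSF14}, giving $\beta=O(1)$, and the edge-weighted Steiner tree problem admits a constant-factor approximation on arbitrary graphs~\cite{GoemansORZ12,ByrkaGRS13}, giving $\beta'=O(1)$. Theorem~\ref{thm.spanning} then delivers a $\beta(\beta'+1)$-approximation for the connected dominating set problem on $G$.

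Composing the two steps gives the result: the \vcsp on unit disk graphs admits a $30\,\beta(\beta'+1)$-approximation, which is a constant. There is no real obstacle here; the only point requiring care is the bookkeeping — confirming that the intermediate connected dominating set instance stays within the unit disk graph class (so that Theorem~\ref{thm.spanning} is applicable) and that its terminal set is $V$ (so that the simpler algorithm of Section~\ref{sec.spanning}, rather than the full LP-rounding of the general case, already suffices). Both are immediate from the form of the reduction in Theorem~\ref{thm.reduction-unit}.
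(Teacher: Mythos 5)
Your proposal is correct and follows exactly the chain the paper intends: Lemma~\ref{lem.constant-degree} supplies $\alpha=29$, Theorem~\ref{thm.reduction-unit} reduces the \vcsp to a connected dominating set instance on the same unit disk graph with terminal set $V$, and Theorem~\ref{thm.spanning} (with constant $\beta$ for dominating set in unit disk graphs and constant $\beta'$ for edge-weighted Steiner tree) finishes the argument. The resulting factor $30\beta(\beta'+1)$ and the bookkeeping observations match the paper's reasoning.
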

  
  \subsection{Algorithm for the connected dominating set problem}
  \label{sec.lp-rounding}
  
  We now provide a constant-factor approximation algorithm for the
  general case of 
  the connected dominating set problem in unit disk graphs.
  Our algorithm is based on the idea given by  Huang, Li, and Shi~\cite{HuangLS15}.

 \begin{figure}
  \centering
  \includegraphics[scale=.8]{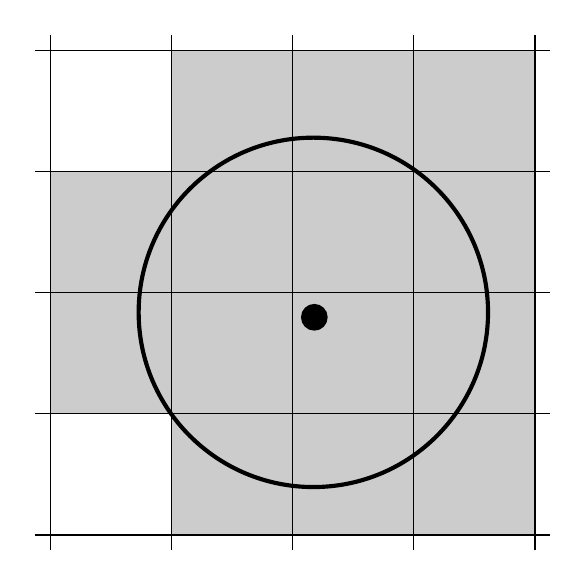}
  \caption{A unit disk and squares of sides $\sqrt{2}/2$}
  \label{fig.square}
 \end{figure}
  
   We say that a graph $G=(V,E)$ has a property \emph{$\pi$}
   if there is a partition $\Pi:=\{V_1,\ldots,V_k\}$ of the
   vertex set $V$ such that each $V_i \in \Pi$ induces a clique
   in $G$ and each vertex $v \in V$ satisfies
   $|\{i =1,\ldots,k \colon V_i \cap N_G[v]\neq
   \emptyset\}| \leq \theta$
   with some constant $\theta$.
   The following lemma proves that 
   a unit disk graph possesses this property.

   \begin{lemma}
    A unit disk graph has the property $\pi$ with a constant $\theta=14$.
   \end{lemma}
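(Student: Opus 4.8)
The plan is to exhibit the partition $\Pi$ explicitly using the geometric representation of the unit disk graph $G=(V,E)$. Fix a geometric representation, embedding each vertex as a point in the plane. Tile the plane by a grid of axis-parallel half-open squares of side length $\sqrt{2}/2$, so that any two points lying in the same square are at Euclidean distance at most $\sqrt{2}\cdot(\sqrt{2}/2)=1$, hence adjacent in $G$. For each square $s$ that contains at least one vertex, let $V_s$ be the set of vertices embedded in $s$; then $V_s$ induces a clique in $G$, and the nonempty $V_s$'s form a partition $\Pi$ of $V$. This handles the clique requirement immediately.

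The remaining task is to bound $\theta$, i.e. to show that for every vertex $v$ the closed neighborhood $N_G[v]$ meets at most $14$ of the squares. The key observation is that if $V_s\cap N_G[v]\neq\emptyset$, then $s$ contains a point at Euclidean distance at most $1$ from $v$ (namely a neighbor of $v$, or $v$ itself). Hence every such square $s$ intersects the closed disk $D$ of radius $1$ centered at $v$. So it suffices to count how many grid squares of side $\sqrt{2}/2$ can intersect a disk of radius $1$. A square that intersects $D$ is entirely contained in the slightly larger disk $D'$ of radius $1+\sqrt{2}/2\cdot\sqrt{2}=1+1=2$ (the diameter of a square bounds how far its far corner can be from a point of the square that lies in $D$); more carefully, the farthest corner is within distance $\sqrt{2}\cdot(\sqrt{2}/2)$ of the intersection point, so within distance $1+1=2$ of $v$. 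The squares are interior-disjoint, each has area $1/2$, and all lie in a disk of radius $2$, whose area is $4\pi$. Therefore the number of such squares is at most $4\pi/(1/2)=8\pi<25.2$, which already gives a constant; to reach the claimed $\theta=14$ I would refine this by a direct geometric count (e.g. observe that the relevant squares lie within a bounded band around the grid lines near $v$ and enumerate the at most $14$ possibilities, or sharpen the area estimate by using the exact union of squares meeting a radius-$1$ disk rather than the crude enclosing disk).

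The main obstacle is the sharp constant: the easy area argument only yields $\theta\le 25$ or so, and squeezing it down to $14$ requires a more careful packing/covering estimate — pinning down exactly which squares of the $\sqrt{2}/2$-grid can be met by a radius-$1$ disk, presumably via a worst-case placement of the disk center relative to the grid and a finite case check. Everything else (the clique property, the reduction to counting squares intersecting $D$) is routine once the tiling is set up as in Figure~\ref{fig.square}.
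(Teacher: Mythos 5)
Your construction is the same as the paper's: the tiling by half-open squares of side $\sqrt{2}/2$, the observation that each nonempty cell induces a clique, and the reduction of the bound on $\theta$ to counting how many cells a disk of radius $1$ can intersect. The gap is precisely the step you defer: the area argument yields only $8\pi/(1/2)\cdot\frac{1}{4}$ — that is, $4\pi/(1/2)=8\pi<26$ cells — so as written you have established property $\pi$ with $\theta\le 25$, not $\theta=14$. This is not cosmetic for the paper: the constant $\theta$ propagates multiplicatively into the LP-rounding analysis (the final ratio in Theorem~\ref{thm.connfl} is $2(\gamma+1)\theta$), so the lemma really is being used with the value $14$, and your proof does not deliver it. (The paper itself closes this step only by inspection of Figure~\ref{fig.square}, so you are not missing a slick trick, but the claim you would need to check is concrete and finite.)

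A cleaner route to the sharp constant than refining the area bound: with $s=\sqrt{2}/2$, the disk of radius $1=\sqrt{2}\,s$ projects onto an interval of length $2\sqrt{2}\,s<3s$ on each axis, so it meets at most $4$ consecutive columns and at most $4$ consecutive rows of the grid, hence at most the $16$ cells of a $4\times 4$ block. It then remains to rule out two of the four corner cells of that block: the disk reaches the corner cell with inner lattice corner $p$ only if its center lies within distance $\sqrt{2}\,s$ of $p$, and the four inner corners are pairwise at distance $\ge 2s$ apart while the center must also stay within distance $(\sqrt{2}-1)s$ of the block's central cross to keep touching all four rows and columns; a short case check shows at most two corner cells can be reached simultaneously with all $16-4=12$ non-corner cells, giving $12+2=14$ (and this is attained, e.g., by a center on a vertical grid line slightly above the block's center). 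Without some such finite verification, the stated value $\theta=14$ is unproved, even though property $\pi$ with a larger constant — which is all that is needed for a constant-factor approximation, at a worse ratio — does follow from your argument.
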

   \begin{proof}
    Divide the Euclidean plane into squares of side $\sqrt{2}/2$.
    We define each class $V_i$ of the partition $\mathcal{P}$
    as a set of vertices whose positions
    are on the same square
    in the geometric representation of $G$.
    If a vertex $u$ is on a side with more than one square, then we assign $u$
    to the upper-right square.
  Then, since any two vertices in the same square are within a unit distance, each class $V_i$ of $\mathcal{P}$ induces a clique in $G$.
  Moreover,
  the neighbors of a vertex
  belong to at most 14 classes of $\mathcal{P}$.
  This is because any unit disk intersects at most 14 squares; 
  see Figure~\ref{fig.square}.
  In the example shown in Figure~\ref{fig.square},
  the unit disk intersects the gray squares.
   The disk touches the square in the lower left,
    but we do not say that they intersect because
    a vertex on a border belongs
    to the upper-right square.
    Hence the unit disk graph satisfies property $\pi$ with
  $\theta=14$.
   \end{proof}  

   In the remainder,
   we 
   assume that 
   a vertex $r$ is spanned by the tree in an optimal solution to the problem.
   Although we do not know which vertex in $G$ is spanned, 
   we can guess it by applying the algorithm with setting each vertex in $V$ to $r$.

  For each $v \in V$, let $\mathcal{P}_v$ be the set of paths between $r$ and $v$.
  Under the assumption that $r$ is spanned by an optimal solution,
  the problem is relaxed to the following LP:
   \begin{align}\label{eq.lp}
    \begin{aligned}
     \parbox{6em}{minimize} & \sum_{v \in V}w(v)x(v) + \sum_{e \in E} l(e)y(e) \\
    \parbox{6em}{subject to} &
    \sum_{v \in N_G[t]} x(v) \geq 1 &&  \forall t \in T,\\
    & \sum_{P \in \mathcal{P}_v}f(P) = x(v) &&  \forall v \in V,\\
    & \sum_{P \in \mathcal{P}_v: e \in P}f(P) \leq y(e) &&  \forall e \in E, \forall v \in V,\\
    & x(v) \geq 0  && \forall v \in V,\\
   & y(e) \geq 0  && \forall e \in E,\\
   & f(P) \geq 0  && \forall P \in \bigcup_{v \in V}\mathcal{P}_v.
    \end{aligned}   
   \end{align}  
  Indeed, if $x \in \{0,1\}^V$ and $y \in \{0,1\}^E$, then the feasible
  solution $(x,y,f)$
  for \eqref{eq.lp} corresponds to a feasible solution to the connected
  dominating set problem.
  Here, $x(v)$ indicates if vertex $v$ is included in
  a dominating set $S$ (if $x(v)=1$, $v$ is included in $S$), and $y(e)$ indicates if
  edge $e$ is included in a tree $F$ that spans the dominating
  set $S$ (if $y(e)=1$, $e$ is included in $F$).
  Also, $f(P)$ represents the flow value along path $P$.
  The second constraint demands that the flow value between $r$ and $v$
  is at least $x(v)$,
  and the third constraint means that the flow between $r$ and $v$
  obeys the edge capacities $y$.
  Hence, if $x(v)=1$,
  one unit of flow runs between $r$ and $v$.
  This means that the minimum cut separating $v$ from $r$ with respect
  to the edge capacities $y$ has a capacity of at least 1.
  Hence the edge set $\{e \in E \colon y(e)=1\}$ connects $r$ and each
  vertex $v$ with $x(v)=1$.

  Although there are an exponential number of variables 
  in the LP~\eqref{eq.lp}, it can be converted into an equivalent
  formulation of polynomial size. Hence 
  an optimal solution $(x^*,y^*,f^*)$ for \eqref{eq.lp} can be computed
  in polynomial time.
  Our algorithm computes this, and then
  from this optimal fractional solution, it constructs a dominating set $S$
  and a tree $F$, as follows.

  We define $I$ as $\{i=1,\ldots,k \colon
  \sum_{v \in  V_i}x^*(v)\geq 1/\theta\}$.
  We restrict a dominating set to be included in $V_{I}:=\bigcup_{i \in I}V_i$.
  Namely, the dominating set $S$ computed by our algorithm
  is feasible to the following integer program:
   \begin{align}\label{eq.dominating}
    \begin{aligned}
     \parbox{6em}{minimize} & \sum_{v \in V}w(v)x(v)\\
    \parbox{6em}{subject to} &
     \sum_{v \in N[t] \cap V_I} x(v) \geq 1 &&  \forall t \in T,\\
     & x(v) \in \{0,1\}  && \forall v \in V.
    \end{aligned}   
   \end{align}  

   By replacing constraint $x(v) \in \{0,1\}$ with $x(v)\geq 0$,
   we obtain an LP relaxation of \eqref{eq.dominating}.
   By the following lemma, the optimal objective value of this relaxation can be bounded by 
   $\theta$ times the weight of $x^*$.

   \begin{lemma}
    The LP relaxation of \eqref{eq.dominating}
    has the optimal objective value that is at most $\theta \sum_{v \in V}w(v)x^*(v)$.
   \end{lemma}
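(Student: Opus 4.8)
The plan is to construct an explicit feasible point of the LP relaxation of~\eqref{eq.dominating} whose objective value is at most $\theta\sum_{v\in V}w(v)x^*(v)$; since the lemma only asserts a bound on the optimum, this suffices. The natural candidate is the restriction of $x^*$ to $V_I$ scaled up by $\theta$: set $\bar x(v):=\theta\,x^*(v)$ for $v\in V_I$ and $\bar x(v):=0$ otherwise. Then $\bar x\ge 0$, and its objective value is $\theta\sum_{v\in V_I}w(v)x^*(v)\le\theta\sum_{v\in V}w(v)x^*(v)$, so the whole argument reduces to verifying the covering constraints, i.e.\ showing that $\sum_{v\in N[t]\cap V_I}x^*(v)\ge 1/\theta$ for every $t\in T$ (multiplying through by $\theta$ then gives the constraint for $\bar x$).

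To check this I would fix $t\in T$ and set $J:=\{i : V_i\cap N[t]\neq\emptyset\}$. Since $\Pi$ partitions $V$, the sets $V_i\cap N[t]$ with $i\in J$ partition $N[t]$, so the first constraint of~\eqref{eq.lp} gives
\[
  \sum_{i\in J}\ \sum_{v\in V_i\cap N[t]}x^*(v)\;=\;\sum_{v\in N[t]}x^*(v)\;\ge\;1 ,
\]
and property~$\pi$ applied to the vertex $t$ (with the constant $\theta$ used in the definition of $I$) gives $|J|\le\theta$. For each $i\in J\setminus I$, the definition of $I$ yields $\sum_{v\in V_i\cap N[t]}x^*(v)\le\sum_{v\in V_i}x^*(v)<1/\theta$, so the total fractional mass placed on classes outside $I$ is strictly less than $|J\setminus I|/\theta\le 1$; in particular $J\cap I\neq\emptyset$, hence $|J\setminus I|\le|J|-1\le\theta-1$. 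Subtracting this from the unit lower bound,
\[
  \sum_{v\in N[t]\cap V_I}x^*(v)\;=\;\sum_{i\in J\cap I}\ \sum_{v\in V_i\cap N[t]}x^*(v)\;\ge\;1-\frac{|J\setminus I|}{\theta}\;\ge\;1-\frac{\theta-1}{\theta}\;=\;\frac{1}{\theta},
\]
which is exactly the bound needed; the first equality here uses $V_i\cap N[t]=\emptyset$ for $i\in I\setminus J$.

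The argument is short, and I expect the only delicate point to be the counting in the last display: one has to observe that the ``leakage'' of fractional weight onto classes excluded from $I$ cannot swallow the entire unit of coverage guaranteed by~\eqref{eq.lp}, so at least one relevant class survives in $I$ and $|J\setminus I|$ is bounded by $\theta-1$ rather than $\theta$ --- and it is precisely this gain of one that makes the blow-up factor $\theta$ (rather than $\theta+1$ or more) enough. Everything else --- nonnegativity of $\bar x$ and the trivial estimate $\sum_{v}w(v)\bar x(v)=\theta\sum_{v\in V_I}w(v)x^*(v)\le\theta\sum_{v}w(v)x^*(v)$ --- is routine.
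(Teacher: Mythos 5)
Your proof is correct and takes essentially the same approach as the paper: exhibit $\theta x^*$ (restricted to $V_I$) as a feasible point of the relaxation, using property $\pi$ together with the $1/\theta$ threshold defining $I$. The only difference is the counting step: the paper pigeonholes to find a single class $V_i$ meeting $N[t]$ with $\sum_{v\in N[t]\cap V_i}x^*(v)\ge 1/\theta$, which then automatically lies in $I$, whereas you bound the mass leaked to classes outside $I$ by $(\theta-1)/\theta$; both routes give the same inequality $\sum_{v\in N[t]\cap V_I}x^*(v)\ge 1/\theta$.
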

   \begin{proof}
    For each terminal $t \in T$, there is a class $V_i$ of $\mathcal{P}$ with
    $\sum_{v \in N[t] \cap V_i} x^*(v) \geq 1/\theta$ (and hence $i \in I$),
  because $\sum_{v \in N[t]}x^*(v) \geq 1$ holds and 
  the vertices in $N[t]$ belong to at most $\theta$ classes of
  partition $\mathcal{P}$.
  This implies that $\theta x^*$ is feasible to the LP relaxation of
   \eqref{eq.dominating}. Therefore, the optimal objective value of the relaxation
   is at most $\theta \sum_{v \in V}w(v)x^*(v)$.
   \end{proof}

   Problem \eqref{eq.dominating} is a special case of the geometric
   set cover problem, in which
   the ground set is a set of points on a Euclidean plane,
   and each set is represented by
   a unit disk.
   Several constant-factor approximation algorithms are known for this
   problem, including a PTAS due to Li and Jin~\cite{LiJ15}.
   However, most of them are not useful for our purpose because
   we require bounding the weight of the output solution
   with regard to the optimal value of the LP relaxation of \eqref{eq.dominating}.
   As far as we know, the only constant-factor approximation algorithm
   satisfying this requirement is 
   due to Chan~et~al.~\cite{ChanGKS12}.
   Let $\gamma$ be the approximation factor of this algorithm.
   Then, this algorithm computes $S \subseteq V_I$
   such that $T$ is dominated by $S$ and $w(S) \leq \gamma\theta \sum_{v \in V}w(v)x^*(v)$.
   In our algorithm for the connected dominating set problem,
   the dominating set is defined 
   as the vertex set $S$ computed by the algorithm for the geometric set
   cover problem.

   Our algorithm then computes a tree that spans $r$ and $S$.
   Let us explain how to compute the tree.
   For each $i \in I$,
   we choose an arbitrary vertex in $V_i$ and call it $v_i$.
   We use an algorithm for the
   Steiner tree problem to construct a minimum-length tree that spans $r$ and all vertices
   $v_i$, $i \in I$. An LP relaxation of this Steiner tree problem
   can be written as follows:
   \begin{align}\label{eq.steinertree}
    \begin{aligned}
     \parbox{6em}{minimize} & \sum_{e \in E}l(e)y(e)\\
    \parbox{6em}{subject to} &
     \sum_{P \in \mathcal{P}_{v_i}}f(P) = 1 && \forall i \in I,\\
     & \sum_{P \in \mathcal{P}_{v_i}: e\in P}f(P) \leq y(e) && \forall  e\in E,  \forall i \in I,\\
     & y(e) \geq 0  && \forall e \in E,\\
     & f(P) \geq 0 && \forall P \in \bigcup_{i \in I} \mathcal{P}_{v_i}.
    \end{aligned}   
   \end{align}  

   We note that $(y^*,f^*)$ is not necessarily feasible to
   \eqref{eq.steinertree}.
   Nevertheless, we can
   bound the optimal objective value of \eqref{eq.steinertree}.

   \begin{lemma}
    The optimal objective value of \eqref{eq.steinertree} is at most $\theta (\sum_{v \in V}w(v)x^*(v)+\sum_{e \in E}l(e)y^*(e))$.
   \end{lemma}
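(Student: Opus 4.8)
The plan is to build, directly from the optimal fractional solution $(x^*,y^*,f^*)$ of the LP~\eqref{eq.lp}, an explicit feasible solution $(y,f)$ of~\eqref{eq.steinertree} whose cost $\sum_{e\in E}l(e)y(e)$ is at most $\theta\bigl(\sum_{v\in V}w(v)x^*(v)+\sum_{e\in E}l(e)y^*(e)\bigr)$; since the LP optimum is bounded above by the cost of any feasible solution, this proves the lemma. For each $v\in V$ write $f^*_v$ for the restriction of $f^*$ to $\mathcal{P}_v$, a flow of value $x^*(v)$ from $r$ to $v$ with $\sum_{P\in\mathcal{P}_v:e\in P}f^*(P)\le y^*(e)$ for every $e$. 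The idea is: scale the whole solution by $\theta$, and for each class $V_i$ with $i\in I$ consolidate onto the representative $v_i$ all the ($\theta$-scaled) flow that~\eqref{eq.lp} ships into $V_i$, routing the final hop along the clique edges of $V_i$; the rule $i\in I$, i.e.\ $\sum_{v\in V_i}x^*(v)\ge 1/\theta$, guarantees that this consolidated amount is at least $1$, which is exactly the demand of $v_i$ in~\eqref{eq.steinertree}.

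Concretely, for each $i\in I$ and each $v\in V_i$ I append the clique edge $vv_i$ (nothing when $v=v_i$) to every flow path of $\theta f^*_v$, and let $g_i$ be the sum over $v\in V_i$ of the resulting flows from $r$ to $v_i$; after the usual flow-decomposition cleanup (discarding cycles and shortcutting) this is a flow from $r$ to $v_i$ of value $\theta\sum_{v\in V_i}x^*(v)\ge 1$, which I truncate to value exactly $1$ (this only decreases it). I define $y$ on $E$ by $y(e)=\theta y^*(e)$, plus an extra $\theta x^*(v)$ on each clique edge $vv_i$ with $i\in I$ and $v\in V_i\setminus\{v_i\}$; since $\Pi$ is a partition, every edge is such a spoke for at most one class, so this is well defined. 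The cost bound is then immediate: using the standing hypothesis $l(e)\le\min\{w(u),w(v)\}$ of the connected dominating set problem, $l(vv_i)\le w(v)$, hence $\sum_{e\in E}l(e)y(e)=\theta\sum_{e\in E}l(e)y^*(e)+\theta\sum_{i\in I}\sum_{v\in V_i\setminus\{v_i\}}l(vv_i)x^*(v)\le\theta\bigl(\sum_{e\in E}l(e)y^*(e)+\sum_{v\in V}w(v)x^*(v)\bigr)$.

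The remaining and, I expect, main task is to check that $(y,\{g_i\}_{i\in I})$ is feasible for~\eqref{eq.steinertree}, i.e.\ that each $g_i$ obeys the capacities $y$, equivalently that every $r$--$v_i$ cut $\delta(R)$ (with $r\in R$, $v_i\notin R$) satisfies $y(\delta(R))\ge 1$. Splitting $y(\delta(R))=\theta y^*(\delta(R))+\theta\sum_{v\in V_i\cap R}x^*(v)+(\text{nonnegative spoke terms of other classes})$, one lower-bounds $y^*(\delta(R))$ by noting that for every $v\in V_i\setminus R$ the flow $f^*_v$ sends $x^*(v)$ units from $r\in R$ to $v\notin R$ across $\delta(R)$ while respecting $f^*_v(e)\le y^*(e)$, so $y^*(\delta(R))\ge x^*(v)$; combining this per-commodity ``max'' bound with the ``sum'' of spoke capacities over $V_i\cap R$ and with $\sum_{v\in V_i}x^*(v)\ge 1/\theta$ should yield $y(\delta(R))\ge 1$. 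Pinning down exactly why the threshold $1/\theta$ used to define $I$ is the right one to make the single global scaling factor $\theta$ suffice in this cut estimate is the delicate point of the argument, and is where I would concentrate the effort.
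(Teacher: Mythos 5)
Your construction is the one the paper uses: reroute each commodity $v\in V_i$ onto the representative $v_i$ along the clique spoke $vv_i$, pay $\theta x^*(v)$ of extra capacity on that spoke, keep $\theta y^*$ elsewhere, and bound the extra cost via $l(vv_i)\le w(v)$. Up to the cost accounting you and the paper agree completely.

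The step you defer --- verifying that these capacities support a unit flow to each $v_i$ --- is a genuine gap, and the cut argument you sketch does not close it. The capacity constraints of \eqref{eq.lp} are per commodity, so for a cut $\delta(R)$ with $r\in R$ and $v_i\notin R$ they only yield $y^*(\delta(R))\ge \max_{v\in V_i\setminus R}x^*(v)$, not the sum over $V_i\setminus R$: two commodities of the same class may push all of their flow through the very same bottleneck edge. Hence the quantity you can certify is $\theta\bigl(\max_{v\in V_i\setminus R}x^*(v)+\sum_{v\in V_i\cap R}x^*(v)\bigr)$, which can be strictly smaller than $\theta\sum_{v\in V_i}x^*(v)$ and in particular smaller than $1$. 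Concretely, take $V_i\cap R$ carrying no $x^*$-mass and $V_i\setminus R=\{u_1,u_2\}$ with $x^*(u_1)=x^*(u_2)=1/(2\theta)$, both flows routed across a single edge $e_0\in\delta(R)$ with $y^*(e_0)=1/(2\theta)$; then $y(\delta(R))=1/2$, so no unit flow to $v_i$ exists and the ``single global scaling by $\theta$'' does not suffice for this choice of $y$. You should know that the paper's own proof performs exactly the same rerouting (it appends $vv_i$ to every path of the commodity-$v$ flow and adds $\epsilon=x^*(v)$ of capacity on the spoke) and then asserts feasibility of the scaled pair; it never checks the constraint $\sum_{P\in\mathcal{P}_{v_i}:e\in P}f(P)\le y(e)$ on an interior edge used by several commodities of the same class, which is precisely where your cut estimate breaks. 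So you have correctly located the delicate point, but neither your ``should yield'' nor the argument as written in the paper resolves it; closing it would require either an aggregated (per-class) capacity constraint in \eqref{eq.lp} or a different way of charging the consolidated flow.
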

   \begin{proof}
    We define a feasible solution for \eqref{eq.steinertree}
    from $(y^*,f^*)$.
    First, initialize $(y,f)$ to $(y^*,f^*)$.
   Let $i \in I$.
   Then, $\sum_{v \in V_i}\sum_{P \in \mathcal{P}_v}f^*(P) \geq \sum_{v \in
    V_i}x^*(v) \geq 1/\theta$.
    Recall that there exists an edge $v v_i$ for each vertex $v \in V_i \setminus v_i$.
    If $\sum_{P \in \mathcal{P}_v}f^*(P)=\epsilon$ for $v \in V_i \setminus v_i$,
   we increase $y(vv_i)$ by $\epsilon$, increase $f(P \cup \{vv_i\})$ by $f(P)$,
   and set $f(P)=0$, for every $P\in \mathcal{P}_v$.
   Notice that $l(vv_i)$ does not exceed $w(v)$.
   Hence the increase of $y(vv_i)$ costs $l(vv_i)\epsilon \leq w(v)\epsilon =w(v)x^*(v)$.
   We do this for every $i \in I$ and for every vertex $v \in V_i \setminus v_i$.
   At the termination of this procedure,
   $\sum_{e\in E}l(e)y(e) \leq \sum_{v \in V}w(v)x^*(v) + \sum_{e\in E}l(e)y^*(e)$,
   and 
   $\sum_{P \in \mathcal{P}_{v_i}}f(P) = \sum_{v \in V_i}\sum_{P \in
   \mathcal{P}_{v}}f^*(P) \geq 1/\theta$.
   We define $(y',f')$ as
   $(\theta y, \theta f)$. Then, $(y',f')$ is feasible for \eqref{eq.steinertree},
   and its objective value does not exceed $\theta (\sum_{v \in
    V}w(v)x^*(v)+\sum_{e \in E}l(e)y^*(e))$,
    completing the proof.
   \end{proof}

   Goemans and Bertsimas~\cite{GoemansB93}
   showed that a Steiner tree of length at most twice
   the optimal objective value of \eqref{eq.steinertree}
   can be computed from a minimum spanning tree in the metric completion
   on the terminal set.
   Namely, there is an algorithm that computes a tree $F$ spanning $r$
   and all vertices $v_i$, $i \in I$,
   such that $l(F) \leq 2 \theta (\sum_{v \in V}w(v)x^*(v)+\sum_{e \in  E}l(e)y^*(e))$.
   $F$ may not span a vertex $v \in S$.
   For such a vertex $v$, we add an edge joining $v$ with $v_i$, where
   $i$ is an index such that $v \in V_i$.
   Notice that $G$ contains an edge $vv_i$,
   because $V_i$ induces a clique.
   Let $F'$ denote the set of these added edges.
   Notice that $l(F')\leq w(S)$.
   Our algorithm outputs $(F\cup F',S)$ as a solution for the connected
   dominating set problem.
   Recall that we are assuming here that $r$ is spanned by an optimal solution.
   When we implement the algorithm, we apply it to the vertices
   in $V$ as $r$, and define the output as the best of the obtained solutions.
   
    \begin{theorem}
     \label{thm.connfl}
    The solution $(F\cup F',S)$ computed by the above algorithm is a
    $2(\gamma+1) \theta$-approximate solution for the connected
     dominating set problem.
    \end{theorem}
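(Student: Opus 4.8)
The plan is to sum up the bounds established in the three preceding lemmas and add the cost of the final augmentation step $F'$. Let $(F^*,S^*)$ denote an optimal solution for the connected dominating set problem, and write $\mathrm{OPT}=w(S^*)+l(F^*)$. Since we run the algorithm once for every choice of $r\in V$ and return the best solution, it suffices to analyze the run in which $r$ is a vertex spanned by $F^*$; for that run, the LP~\eqref{eq.lp} is a valid relaxation, so its optimal value $\sum_{v\in V}w(v)x^*(v)+\sum_{e\in E}l(e)y^*(e)$ is at most $\mathrm{OPT}$.

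First I would bound $w(S)$. By the lemma on the LP relaxation of~\eqref{eq.dominating}, that relaxation has optimal value at most $\theta\sum_{v\in V}w(v)x^*(v)$; feeding this into the algorithm of Chan~et~al.\ with approximation factor $\gamma$ yields $w(S)\le\gamma\theta\sum_{v\in V}w(v)x^*(v)$. Next I would bound $l(F)$: by the lemma on~\eqref{eq.steinertree}, its optimal value is at most $\theta(\sum_{v\in V}w(v)x^*(v)+\sum_{e\in E}l(e)y^*(e))$, and the Goemans--Bertsimas factor-$2$ rounding gives a spanning tree $F$ with $l(F)\le 2\theta(\sum_{v\in V}w(v)x^*(v)+\sum_{e\in E}l(e)y^*(e))$. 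Finally, the augmentation edges $F'$ connect each remaining $v\in S$ to its clique representative $v_i$, and since $l(vv_i)\le w(v)$ we have $l(F')\le w(S)$.

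Adding these up, the objective of $(F\cup F',S)$ is
\begin{align*}
 l(F\cup F')+w(S)
 &\le l(F)+l(F')+w(S)\\
 &\le l(F)+2w(S)\\
 &\le 2\theta\Bigl(\textstyle\sum_{v}w(v)x^*(v)+\sum_{e}l(e)y^*(e)\Bigr)
   +2\gamma\theta\textstyle\sum_{v}w(v)x^*(v)\\
 &\le 2(\gamma+1)\theta\Bigl(\textstyle\sum_{v}w(v)x^*(v)+\sum_{e}l(e)y^*(e)\Bigr)\\
 &\le 2(\gamma+1)\theta\cdot\mathrm{OPT},
\end{align*}
which is the claimed bound. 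The only mildly delicate point — not really an obstacle, but the step that needs care — is checking that every structural requirement used along the way is actually met: that $F\cup F'$ genuinely spans $r$ and all of $S$ (so it is a feasible connected dominating set together with $S$), that the clique edges $vv_i$ used in $F'$ exist because each $V_i$ induces a clique, and that in the $r$-is-spanned run the three lemmas apply verbatim so that all of the $x^*,y^*$ quantities are dominated by $\mathrm{OPT}$. Once these are in place, the arithmetic above closes the proof.
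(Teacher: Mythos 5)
Your proposal is correct and follows essentially the same route as the paper's own proof: it assembles the three lemma bounds $w(S)\le\gamma\theta\sum_v w(v)x^*(v)$, $l(F)\le 2\theta(\sum_v w(v)x^*(v)+\sum_e l(e)y^*(e))$, and $l(F')\le w(S)$, checks feasibility of $(F\cup F',S)$, and concludes by comparing against the optimal value of the LP relaxation~\eqref{eq.lp} for the run in which $r$ is chosen correctly. The arithmetic and the final factor $2(\gamma+1)\theta$ match the paper exactly.
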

   \begin{proof}
    $S$ dominates $T$. Moreover,
    $F$ connects each vertex $v_i$, $i \in  I$, to $r$, and $F'$ connects
    each vertex $v \in S\cap V_i$ to $v_i$.
    Hence $(F\cup F',S)$ is feasible for the connected dominating set
    problem.

    As noted above, we have
    $l(F')\leq w(S) \leq \gamma \theta \sum_{v \in V}w(v)x^*(v)$, and 
    $l(F) \leq 2 \theta (\sum_{v \in V}w(v)x^*(v)+\sum_{e \in
    E}l(e)y^*(e))$.
    Hence the objective value $l(F)+l(F')+w(S)$ is at most $2(\gamma+1)\theta$
    times the optimal objective value of \eqref{eq.lp}.
    Since \eqref{eq.lp} relaxes the connected dominating set problem,
    this proves the theorem.
   \end{proof}

   Recall that
   $\theta=14$ and 
   $\gamma$
   is the approximation factor of the geometric set cover algorithm of
   Chan~et.~al.~\cite{ChanGKS12}.
   It is shown in \cite{ChanGKS12} that $\gamma$ is a
   constant, although the bound on $\gamma$ is not stated explicitly.
   Theorem~\ref{thm.connfl} has the following corollary.

    \begin{corollary}
     \label{cor.unitdisk}
    The \vcst admits a constant-factor approximation algorithm in unit disk graphs.
    \end{corollary}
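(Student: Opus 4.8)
The plan is to assemble the three ingredients developed in this section into a single chain. First I would invoke Lemma~\ref{lem.constant-degree}: because the input graph $G$ is a unit disk graph, the \vcst on $G$ has an optimal solution $(F,U)$ in which every vertex of $U$ has degree at most $29$ in $F-(L(F)\setminus U)$. This is exactly the hypothesis of Theorem~\ref{thm.reduction-unit} with $\alpha=29$, so that theorem turns any $\beta$-approximation algorithm for the connected dominating set problem on $G$ into a $30\beta$-approximation algorithm for the \vcst on $G$. The point I would emphasize here is that the reduction underlying Theorem~\ref{thm.reduction-unit} does not alter the underlying graph $G$ — it only reinterprets the vertex weights as edge lengths and passes to a connected dominating set instance on the same $G$ — so $G$ stays a unit disk graph, unlike the reduction of Theorem~\ref{thm.reduction}, which is not closed in this class.

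Next I would plug in Theorem~\ref{thm.connfl}, which gives a $2(\gamma+1)\theta$-approximation for the connected dominating set problem in unit disk graphs, with $\theta=14$ coming from the property~$\pi$ and $\gamma$ the constant approximation factor of the geometric set cover algorithm of Chan~et~al.~\cite{ChanGKS12}. Taking $\beta=28(\gamma+1)$ and composing with the previous step yields an approximation factor of $30\cdot 28(\gamma+1)=840(\gamma+1)$ for the \vcst in unit disk graphs, which is a constant. The guessing assumption used throughout Section~\ref{sec.lp-rounding} (that a particular vertex $r$ is spanned by an optimal tree) is discharged by running the algorithm once for each $r\in V$ and returning the best of the $n$ outputs, which costs only a polynomial blow-up in running time.

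I do not expect a genuine obstacle here: the corollary is a direct composition of Lemma~\ref{lem.constant-degree}, Theorem~\ref{thm.reduction-unit}, and Theorem~\ref{thm.connfl}. The one subtlety worth stating explicitly in the write-up is why the LP-relative guarantee of Theorem~\ref{thm.connfl} — an approximation bounded against the optimum of the relaxation~\eqref{eq.lp} — is enough: Theorem~\ref{thm.reduction-unit} only asks for a $\beta$-approximation against the true optimum of the connected dominating set instance, and since \eqref{eq.lp} relaxes that instance, a bound against the LP optimum is at least as strong. With that observation the inequalities telescope and the constant-factor guarantee for the \vcst in unit disk graphs follows.
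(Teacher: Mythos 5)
Your proposal is correct and follows exactly the paper's intended route: the corollary is obtained by composing Lemma~\ref{lem.constant-degree} (degree bound $\alpha=29$), Theorem~\ref{thm.reduction-unit} (the factor-$(\alpha+1)\beta$ reduction to the connected dominating set problem on the same graph), and Theorem~\ref{thm.connfl} (the $2(\gamma+1)\theta$-approximation with $\theta=14$). The supporting observations you add --- that the reduction preserves the unit disk graph, that guessing $r$ costs only a factor of $n$ in running time, and that the LP-relative guarantee suffices --- are all consistent with the paper.
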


  \section{Steiner tree activation problem in graphs excluding a fixed minor}
  \label{sec.planer}

  In this section, we present a constant-factor approximation algorithm
  for the Steiner tree activation problem in graphs excluding a fixed
  minor.
  In particular, our algorithm is a 11-approximation for planar graphs.
  
  Our algorithm is based on the reduction 
  mentioned in
  Theorem~\ref{thm.reduction}.
  We reduce the problem to the vertex-weighted Steiner tree problem by
  using that reduction,
  and we solve the obtained instance by using the constant-factor approximation algorithm
  proposed by Demaine, Hajiaghayi, and Klein~\cite{DemaineHK09a}
  for the vertex-weighted Steiner tree problem in 
  graphs excluding a fixed minor.
  We prove that this achieves a constant-factor approximation  for the
   Steiner tree activation problem when the input graph is
   $H$-minor-free for some graph $H$ such that $|V(H)|$ is a constant.

   This seems to be an easy corollary to
  Demaine et al.,
  but it is not so
  because the reduction
  does not preserve the $H$-minor-freeness of the input graph.
 Let $G$ be the
  graph obtained by removing one edge from $K_5$.
  It is easy to check that $G$ is planar.
  We consider the \vcsp over $G$.
  The reduction transforms $G$ into another graph $G'$ on the vertex set
  $V(G) \cup \{v_{\circ},v_{\bullet}\colon v \in V(G)\}$.
  Refer to the proof of Theorem~\ref{thm.reduction} for 
  the definition of the edge set of $G'$.
  Notice that the subgraph of $G'$ induced by
  $V_{\bullet}:=\{v_{\bullet}\colon v \in V(G)\}$ is isomorphic to $G$.
  Let $u$ be an arbitrary vertex in $V(G)$
  that is not an end vertex of the
  removed edge.
  The subgraph of $G'$ induced by $V_{\bullet} \cup u_{\circ}$ contains
  a subgraph isomorphic to a subdivision of $K_5$,
  and hence $G'$ is not planar.

  As indicated by this example, the reduction does not preserve
  the $H$-minor-freeness.
  In spite of this, we can prove that the approximation guarantee
  given by Demaine et al.\ 
  extends to the graphs constructed
  from a $H$-minor-free graph by the reduction.

  We recall that
  the reduction constructs a graph $G'$  on the vertex set
  $T \cup \{v_i \colon v \in V, i \in W\}$ from the input graph
   $G=(V,E)$ and the monotone activation functions $f_{uv}\colon W \times W
   \rightarrow \{\top,\bot\}$, $uv \in E$.
  We denote the vertex set $\{v_i \colon i \in W\}$ defined from an
  original vertex $v \in V$ by $U_v$.
  Let $U$ denote $\bigcup_{v \in V}U_v$.

  First, let us illustrate how the algorithm of Demaine~et~al.\ behaves
  for $G'$.
  The algorithm maintains a vertex set $X \subseteq T \cup U$,
  where $X$ is initialized to $T$ at the beginning.
  Let $\Afam(X) \subseteq 2^X$ denote the family of connected components
  that include some terminals
  in the subgraph of $G'[X]$.
  We call each member of $\Afam(X)$ an \emph{active set}.
  The algorithm consists of two phases, called the increase phase and the reverse-deletion
  phase.
  In the increase phase,
  the algorithm iteratively adds vertices to $X$ until $|\Afam(X)|$
  is equal to one. This implies that, when the increase phase terminates, the subgraph induced by $X$
  connects all of the terminals.
  In the reverse-deletion phase,
  $X$ is transformed into an inclusion-wise minimal vertex set that
  induces a Steiner tree.
  This is done by repeatedly removing vertices from $X$ in the reverse
  of the order in which they were added.

  Let $\bar{X}$ be the vertex set $X$ when the algorithm terminates,
  and let $X$ be the vertex set at some point during the increase phase.
  We denote $\bar{X}\setminus X$ by $\bar{X}'$.
  Note that $\bar{X}'$ is a minimal augmentation of $X$ such that
  $X \cup \bar{X'}$ induces a Steiner tree.
  Each $Y \in \Afam(X)$ is disjoint from $\bar{X}'$, because
  $Y \subseteq X$.
  Demaine~et~al. showed the following analysis of their algorithm.

   \begin{theorem}[\cite{DemaineHK09a}]\label{thm:demaine}
    Let $X$ be a vertex set maintained at some moment in the increase
    phase,
    and let $\bar{X}'$ be a minimal augmentation of $X$ so that $X \cup
    \bar{X}'$
    induces a Steiner tree.
    If there is a number $\gamma$ such that
    $\sum_{Y \in \Afam(X)} |\bar{X}' \cap N(Y)| \leq \gamma |\Afam(X)|$
    holds for any $X$ and $\bar{X}'$,
    the algorithm of Demaine~et~al. achieves an approximation factor
    $\gamma$.
   \end{theorem}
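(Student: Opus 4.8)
The plan is to recognize the algorithm of Demaine~et~al.\ as an instance of the moat-growing primal--dual schema for the vertex-weighted Steiner tree problem and then to carry out the usual dual-fitting bookkeeping, the only nonstandard ingredient being the hypothesis on $\sum_{Y\in\Afam(X)}|\bar X'\cap N(Y)|$. First I would fix the relevant LP relaxation: a variable $x(v)\ge 0$ for each vertex of $G'$, and for every vertex set $S$ with $\emptyset\neq S\cap T$ and $T\not\subseteq S$ the cut-covering constraint $\sum_{v\in N(S)}x(v)\ge 1$; its dual maximizes $\sum_S y_S$ subject to $\sum_{S:\,v\in N(S)}y_S\le w(v)$ for every $v$ and $y\ge 0$. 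Then I would observe that the increase phase is exactly uniform growth of $y_S$ over the active sets $Y\in\Afam(X)$ (the terminal-containing components of $G'[X]$, i.e.\ the minimal violated sets with respect to the current $X$), with a vertex $v$ entering $X$ the moment its dual constraint $\sum_{S:\,v\in N(S)}y_S=w(v)$ becomes tight, and that the reverse-deletion phase turns $X$ into an inclusion-wise minimal feasible vertex set $\bar X$. Primal feasibility holds because the increase phase stops only when $|\Afam(X)|=1$, i.e.\ all terminals lie in one component, and along the way every active set is terminal-separating, so the $y_S$ stay dual-feasible throughout.

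The core step is to charge $w(\bar X)$ to the total dual $\sum_S y_S$. The key observation is that if $Y$ is a connected component of $G'[X]$ then $N(Y)\cap X=\emptyset$, so no vertex already in $X$ can lie in the neighbourhood of an active set; consequently, once $v$ enters $X$ its dual constraint is never tightened again, and its final weight satisfies $w(v)=\int_0^{t_v}n_v(t)\,dt$, where $t_v$ is the time $v$ enters $X$ and $n_v(t):=|\{Y\in\Afam(X(t)):v\in N(Y)\}|$. Summing over $v\in\bar X$ and exchanging the order of summation and integration yields
\[
  w(\bar X)=\int_0^{\infty}\ \sum_{Y\in\Afam(X(t))}\bigl|\{\,v\in\bar X:\ v\in N(Y),\ t_v>t\,\}\bigr|\,dt .
\]
Setting $\bar X'(t):=\bar X\setminus X(t)$, I would argue that the inner set is exactly $\bar X'(t)\cap N(Y)$: a vertex of $N(Y)$ lies outside $X(t)$ by the observation above, hence it belongs to $\bar X$ iff it belongs to $\bar X'(t)$, and in that case $t_v>t$ automatically. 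Moreover $\bar X'(t)$ is a \emph{minimal} augmentation of $X(t)$ into a feasible set --- this is precisely what processing vertices in the reverse of their insertion order guarantees --- so the hypothesis of the theorem applies to the pair $(X(t),\bar X'(t))$ for every $t$.

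It then remains to substitute: $w(\bar X)\le\int_0^{\infty}\gamma\,|\Afam(X(t))|\,dt=\gamma\sum_S y_S$, and since the $y_S$ produced by the algorithm form a feasible dual solution and the LP relaxes the vertex-weighted Steiner tree problem, $\sum_S y_S\le\mathrm{OPT}$; hence $w(\bar X)\le\gamma\,\mathrm{OPT}$, which is the claimed approximation guarantee.

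The step I expect to be the main obstacle --- or at least the one demanding the most care --- is the interplay between the reverse-deletion phase and this charging. One must check that for every moment $t$ of the increase phase the leftover $\bar X\setminus X(t)$ is genuinely a minimal augmentation, not merely some augmentation, of $X(t)$; this is exactly where deleting in reverse insertion order is essential, and it is the only place the specific structure of the algorithm (as opposed to an arbitrary primal-dual run) is used. Secondary points are making the continuous-time description rigorous --- the growth proceeds through finitely many events, a vertex going tight or two active sets merging, so the integrals are finite sums --- and verifying that the merging of active sets and the appearance of new ones do not break the identity $w(v)=\int_0^{t_v}n_v(t)\,dt$; both follow from the fact that once $v\in X$ it can never again lie in $N(Y)$ for an active set $Y$.
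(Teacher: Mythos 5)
Your argument is correct: it is the standard Goemans--Williamson-style dual-fitting analysis (uniform moat growth on the terminal-containing components as minimal violated sets, tight dual constraints triggering vertex purchases, and the reverse-deletion order guaranteeing that $\bar{X}\setminus X(t)$ is a \emph{minimal} augmentation at every time $t$, which is indeed the one place the algorithm's structure is needed). The paper itself states this theorem as a black-box citation of Demaine, Hajiaghayi, and Klein and gives no proof, and your reconstruction is essentially the proof in that reference, so there is nothing to compare beyond noting the match.
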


   In $G'[\bar{X}' \cup (\bigcup_{Y \in \Afam(X)}Y)]$,
   contract each $Y \in \Afam(X)$ into a single vertex, 
   discard all edges induced by $\bar{X}'$ and
   all isolated vertices in $\bar{X}'$,
   and replace multiple edges by single edges.
   This gives us a simple bipartite graph with the bipartition
   $\{A,B\}$ of the vertex set,
   where each vertex in $A$ corresponds to an active set,
   and $B$ is a subset of $\bar{X}'$.
   Let $D$ denote this graph.
   This construction of $D$ is illustrated in
   Figure~\ref{fig.bipartitegraph}.
   We note that $\sum_{Y \in \Afam(X)} |\bar{X}' \cap N(Y)|$ is equal to the number
   of edges in $D$.
   Hence, by Theorem~\ref{thm:demaine},
   if the number of edges is at most a constant factor of $|A|$,
   the algorithm achieves a constant-factor approximation.

   Demaine~et~al.\ proved that
   $|B| \leq 2|A|$, and $D$ is $H$-minor-free if $G$ is $H$-minor-free.
   By \cite{Kostochka84,Thomason01}, these two facts imply that
   the number of edges in $D$ is $O(|A| |V(H)| \sqrt{\log |V(H)|})$.
   When $G$ is planar, 
   together with Euler's formula and the fact that $D$ is bipartite,
   they imply that the number
   of edges in $D$ is at most $6|A|$.

   The proof of Demaine~et~al.\ for $|B| \leq 2|A|$ can be carried to
   our case. However, $D$ is not necessarily $H$-minor-free even if $G$
   is $H$-minor-free.
   Nevertheless, we can bound the number of edges in $D$, as follows.

    \begin{lemma}
     \label{lem.D}
     Suppose that the given activation function is monotone.
     If $G$ is $H$-minor-free, the number of edges in $D$ is
     $O(|A||V(H)|\sqrt{\log |V(H)|})$.
     If $G$ is planar, the number of edges in $D$ is at most $11 |A|$.
    \end{lemma}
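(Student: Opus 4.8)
The plan is to show that, although $D$ need not be $H$-minor-free, it is ``almost'' a minor of $G$, and then to feed this into the standard edge bounds for $H$-minor-free (resp.\ planar) graphs. Throughout I will write $\pi$ for the projection sending a copy $v_i$ to its original vertex $v$ and sending a terminal of $G'$ to itself; since $G$ is simple, no copy $v_i$ is adjacent to a copy $v_j$ of the same vertex, so $\pi$ maps every edge of $G'$ to an edge of $G$. I will also use the inequality $|B|\le 2|A|$, whose proof by Demaine~et~al.\ carries over to our $D$ since it only uses the tree obtained after contracting the active sets.

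My first step is two structural observations, both exploiting monotonicity of the activation functions. For $i\le i'$ one has $N_{G'}(v_i)\subseteq N_{G'}(v_{i'})$: neighbours of $v_i$ in $U_u$ with $u\neq v$ remain neighbours of $v_{i'}$ by monotonicity; if $v$ is a terminal it is adjacent to both copies; and a copy is never adjacent to another copy of $v$. From this I would deduce: \emph{(i)} $\bar X'$ contains at most one copy of each original vertex, and if it contains $v_{i^*}$ then $i^*$ is larger than the index of every copy of $v$ in $X$ --- otherwise a copy $v_p$ of $v$ with $p\ge i^*$ would already lie in $X\subseteq X\cup(\bar X'\setminus\{v_{i^*}\})$, every path through $v_{i^*}$ could be rerouted through $v_p$, and $\bar X'\setminus\{v_{i^*}\}$ would still be a valid augmentation, contradicting minimality of $\bar X'$; and \emph{(ii)} no original vertex has copies in two distinct active sets --- if $v_p\in Y$ and $v_q\in Y'$ with $Y\neq Y'$ active and $p<q$, then $v_p$ has a neighbour $z\in Y\subseteq X$ (the active set $Y$ also contains a terminal, so $|Y|\ge 2$), hence $z\in N_{G'}(v_p)\subseteq N_{G'}(v_q)$, so $v_q$ and $z$ lie in one component of $G'[X]$, i.e.\ $Y=Y'$.

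By \emph{(ii)} the sets $\pi(Y)$, $Y\in\Afam(X)$, are connected and pairwise disjoint, so contracting each of them in $G$ yields a minor $G^{\ast}$ of $G$, which is $H$-minor-free (resp.\ planar) whenever $G$ is. Call $b\in B$ \emph{clean} if $\pi(b)$ lies in no $\pi(Y)$ and \emph{conflicted} otherwise; by \emph{(ii)} a conflicted $b$ has a unique active set $Y_b$ with $\pi(b)\in\pi(Y_b)$. Every $D$-edge at a clean $b$ is witnessed by an edge $zb$ of $G'$ with $z$ in the relevant active set; since $\pi(z)\neq\pi(b)$ it projects to an edge of $G^{\ast}$ joining the contracted active set to $\pi(b)$, and distinct $D$-edges give distinct such edges, all lying in the subgraph of $G^{\ast}$ induced by the $|A|$ contracted active sets together with the clean vertices --- at most $|A|+|B|$ vertices. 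So the number of $D$-edges at clean vertices is bounded by the number of edges of an $H$-minor-free (resp.\ bipartite planar) graph on $|A|+|B|$ vertices. The $D$-edges at conflicted vertices are the hard part: absorbing each conflicted $b$ into the contracted vertex $\pi(Y_b)$ (which, by \emph{(i)}, $\pi(b)=v_{i^*}$ is already adjacent to in $G'$ since $v_{i^*}$ is adjacent to its own active set) turns these edges into edges among the $|A|$ contracted active sets of $G^{\ast}$ plus at most $|B|$ ``self-loop'' edges, and a further monotonicity argument --- routing through the highest-indexed copy --- should bound their number by a constant times $|A|$. Combining the two contributions and using $|B|\le 2|A|$ gives $O(|A|\,|V(H)|\sqrt{\log|V(H)|})$ via the Kostochka--Thomason bound, and in the planar case, where $|A|+|B|\le 3|A|$ and a bipartite planar graph on $v$ vertices has at most $2v-4$ edges, a careful count of the constants yields $11|A|$. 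The main obstacle is exactly the conflicted vertices: this is where $D$ genuinely fails to be $H$-minor-free, and handling it is what costs the constant $11$ in place of Demaine~et~al.'s $6$.
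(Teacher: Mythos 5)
Your two structural observations are exactly the paper's preparatory lemmas (monotonicity gives $N_{G'}(v_i)\subseteq N_{G'}(v_{i'})$ for $i\le i'$; $\bar X$ contains at most one copy of each original vertex; no original vertex has copies in two distinct active sets), and your treatment of the ``clean'' vertices of $B$ is the same minor-of-$G$ argument the paper uses. But the proof has a genuine gap at precisely the point you flag as ``the hard part.'' Absorbing each conflicted $b$ into the contracted vertex $\pi(Y_b)$ does produce a minor of $G$, but it does not give you a bound on the number of $D$-edges at conflicted vertices: a single active set $Y$ can host copies of \emph{many} original vertices $v$ that also have a copy in $B$, so several conflicted vertices $b,b'$ share the same host $Y_b=Y_{b'}=Y$, and their $D$-edges $bZ$ and $b'Z$ to a common active set $Z$ collapse to one parallel class after absorption. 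The number of $D$-edges destroyed this way is not controlled by the edge count of the minor, and ``a further monotonicity argument --- routing through the highest-indexed copy --- should bound their number by a constant times $|A|$'' is an assertion, not an argument; monotonicity alone does not prevent a conflicted $b$ from being adjacent in $D$ to many active sets different from its host.

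The paper's resolution is a refinement of your absorption step: for each active set $Y$, after keeping only the highest-indexed copy of each original vertex (and swapping it for the copy in $B$ when one exists), it \emph{splits} the resulting connected set $\bar Y$ into $k=|V_{B,Y}|$ connected pieces, one per conflicted original vertex hosted by $Y$. Contracting these pieces (rather than all of $\bar Y$ at once) keeps the edges $bZ$, $b'Z$ distinct in the resulting $H$-minor-free graph $D'$, at the cost of increasing the number of contracted vertices by at most $|B'|\le|B|$; the only $D$-edges with no counterpart in $D'$ are the at most $|B'|$ edges joining a conflicted $b$ to its own host, whence $|E(D)|\le|E(D')|+|B'|$. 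Two further points to note if you try to complete your version: the graph $D'$ obtained after this surgery is \emph{not} bipartite (pieces of the same $\bar Y$, and a swapped-in $B$-vertex sitting inside a piece, create edges on the $A$-side), so the paper uses the general planar bound $3n$ rather than the bipartite bound $2n-4$ you invoke; the final count is $3(|A|+|B|)+|B'|\le 3|A|+4|B|\le 11|A|$ using $|B|\le 2|A|$.
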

   
   The following theorem is immediate from Theorem~\ref{thm:demaine} and Lemma~\ref{lem.D}.
  
   \begin{theorem}
    \label{thm.planer}
    If an input graph is $H$-minor-free for some graph $H$,
    then the Steiner tree activation problem with a monotone activation function
    admits an $O(|V(H)| \sqrt{\log |V(H)|})$-approximation algorithm.
    In particular, if the input graph is planar,
    then the problem 
    admits a $11$-approximation algorithm.
   \end{theorem}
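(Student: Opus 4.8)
The plan is to derive Theorem~\ref{thm.planer} directly from Theorem~\ref{thm:demaine} and Lemma~\ref{lem.D}, so the real content is the proof of Lemma~\ref{lem.D} itself. Given a Steiner tree activation instance with a monotone activation function on an $H$-minor-free graph $G$, I would first apply the approximation-preserving reduction of Theorem~\ref{thm.reduction} to obtain an equivalent vertex-weighted Steiner tree instance on the graph $G'$, and then run the algorithm of Demaine, Hajiaghayi, and Klein~\cite{DemaineHK09a} on $G'$. By Theorem~\ref{thm:demaine}, the approximation ratio of that run is bounded by any $\gamma$ with $\sum_{Y\in\Afam(X)}|\bar X'\cap N(Y)|\le\gamma|\Afam(X)|$ for all relevant $X$ and $\bar X'$; by the construction of the auxiliary bipartite graph $D$ in the excerpt, the left-hand side is exactly the number of edges of $D$ and $|\Afam(X)|=|A|$, so Lemma~\ref{lem.D} lets me take $\gamma=O(|V(H)|\sqrt{\log|V(H)|})$ in general and $\gamma\le 11$ in the planar case. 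Composing with the approximation-preservation of Theorem~\ref{thm.reduction} yields the two stated bounds.

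For Lemma~\ref{lem.D} I would follow the template of Demaine et al.: show $|B|\le 2|A|$ (their proof of this uses only that $G'[\bar X]$ is a minimal Steiner tree and carries over unchanged), argue that a graph obtained from $D$ is $H$-minor-free (resp.\ planar), and then apply the edge bound of Kostochka and Thomason~\cite{Kostochka84,Thomason01} (resp.\ the inequality $|E|\le 2|V|-4$ for simple bipartite planar graphs) to a graph on $|A|+|B|=O(|A|)$ vertices. The genuine obstacle is that, unlike for Demaine et al., $D$ itself need not be $H$-minor-free, because the reduction of Theorem~\ref{thm.reduction} does not preserve minor-freeness (this is what the $K_5$-minus-an-edge example in the excerpt shows). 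I would circumvent this using monotonicity, via the following structural fact: for each original vertex $v$, the copies $\{v_i:i\in W\}$ have nested $G'$-neighborhoods, namely $N_{G'}(v_i)\subseteq N_{G'}(v_j)$ whenever $i\le j$; consequently the minimal vertex set $\bar X$ maintained by the algorithm contains at most one copy of each original vertex --- if $v_i,v_j\in\bar X$ with $i\le j$, then deleting $v_i$ keeps $G'[\bar X]$ connected and terminal-spanning, since every $G'[\bar X]$-neighbor of $v_i$ is also adjacent to $v_j$, contradicting minimality --- the only exception being that a terminal $t$ may coexist in $\bar X$ with exactly one of its own copies, because $t$ is adjacent in $G'$ only to its copies, so the shortcut argument breaks there; moreover such a copy forces the active set of $t$ to be the singleton $\{t\}$.

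With this in hand, the natural projection $\pi\colon V(G')\to V(G)$ sending $v_i\mapsto v$ and fixing each terminal is injective on $\bar X$ apart from these benign terminal--copy pairs, which sit in singleton active sets. Hence the supports $\pi(Y)$ of the active sets $Y\in\Afam(X)$ are pairwise disjoint connected vertex subsets of $G$, and contracting them --- after reassigning to the blob of $t$ the $O(1)$-per-terminal edges contributed by the terminal--copy pairs --- exhibits $D$ as essentially a minor of $G$, hence $H$-minor-free (resp.\ planar). Combined with $|B|\le 2|A|$, the Kostochka--Thomason bound (resp.\ Euler's formula) then gives the edge estimates of Lemma~\ref{lem.D}, the slack over the bounds of Demaine et al.\ (e.g.\ $11$ rather than $6$ in the planar case) being the price of handling the terminal wrinkle. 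The step I expect to be the main obstacle, and on which I would spend the most care, is exactly this last one: making the statement ``$D$ is almost a minor of $G$'' quantitatively precise, i.e.\ counting how many edges of $D$ the terminal--copy pairs force us to reassign and checking that this count is $O(|A|)$ (indeed a small constant times $|A|$).
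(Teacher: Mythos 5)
Your overall skeleton matches the paper's: reduce via Theorem~\ref{thm.reduction}, run Demaine et al.\ on $G'$, and prove the edge bound of Lemma~\ref{lem.D} by exhibiting (a modification of) $D$ as an $H$-minor-free graph on $O(|A|)$ vertices, using the monotonicity-induced nesting of neighborhoods ($N_{G'}(v_i)\subseteq N_{G'}(v_j)$ for $i\le j$) and the fact that the minimal final set $\bar X$ contains at most one copy of each original vertex. Those two structural facts are indeed Lemma~\ref{lem.reduction-prop} and Lemma~\ref{lem.planer1} of the paper.

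However, there is a genuine gap in where you locate the failure of injectivity of the projection $\pi$. You claim $\pi$ is injective on the relevant vertex set ``apart from benign terminal--copy pairs.'' This is not the right obstruction, for two reasons. First, the active sets $Y\in\Afam(X)$ are connected components of $G'[X]$ for an \emph{intermediate} set $X$ of the increase phase; $X$ is not minimal, so a single active set can contain several copies $v_i,v_{i'}$ of the same original vertex, and Lemma~\ref{lem.planer1} (which is a statement about the final minimal set $\bar X$) does not rule this out. Second, and more importantly, the vertex set of $D$ lives inside $X\cup\bar X'$, not inside $\bar X$: a copy $v_i$ may lie in an active set $Y\subseteq X$ while a different copy $v_j$ of the same $v$ lies in $B\subseteq\bar X'=\bar X\setminus X$ (only $v_j$ need survive into $\bar X$). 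Contracting $Y$ then identifies two vertices of $D$ that both project to $v$, and $D$ genuinely fails to be a minor of $G$; this has nothing to do with terminals, so your proposed $O(1)$-per-terminal edge reassignment does not repair it. The paper's proof spends most of its effort on exactly this case: it builds a surrogate graph $D'$ by keeping only the highest-indexed copy of each $v$ inside each active set, merging it with the $B$-copy of $v$ when one exists, and splitting each modified active set $\bar Y$ into $|V_{B,Y}|$ connected pieces so that distinct merged vertices land in distinct contracted blobs (using Lemma~\ref{lem.planer2} to ensure different active sets never compete for the same $v$). It then shows $D'$ is a minor of $G$ (Lemma~\ref{lem.minorfree}), that $|\Afam'(X)|\le|A|+|B'|$, and that $D$ has at most $|B'|$ more edges than $D'$ (Lemma~\ref{lem.num_edges}); combined with $|B|\le 2|A|$ this gives the $O(|A||V(H)|\sqrt{\log|V(H)|})$ bound and, via Euler's formula applied to the (no longer bipartite) graph $D'$, the bound $3(|A|+|B|)+|B'|\le 11|A|$ in the planar case. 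Without some version of this merge-and-split construction, your contraction argument does not produce a minor of $G$ and the edge count does not follow.
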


  In the rest of this section, we prove Lemma~\ref{lem.D}.
  We first provide several preparatory lemmas.

  \begin{lemma}
   \label{lem.reduction-prop}
   If $G'$ includes an edge $u_{i}v_j$ for some $u,v\in V$ and $i,j\in
   W$,
   then $G'$ also includes an edge
    $u_{i'}v_{j'}$ for any $i',j' \in W$ with $i' \geq i$ and $j' \geq j$.
  \end{lemma}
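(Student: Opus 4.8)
The plan is to read the claim directly off the definition of the edge set of $G'$ together with the monotonicity of the activation functions. By construction in Theorem~\ref{thm.reduction}, for an original edge $uv \in E$ the reduced graph $G'$ contains the edge $u_i v_j$ precisely when $f_{uv}(i,j) = \top$. So the statement to prove is: if $f_{uv}(i,j) = \top$ and $i' \ge i$, $j' \ge j$, then $f_{uv}(i',j') = \top$. This is exactly the monotonicity assumption imposed on the activation functions in the definition of the Steiner tree activation problem (recalled in the introduction: if $f_{uv}(i,j)=\top$, $i \le i'$, and $j \le j'$, then $f_{uv}(i',j')=\top$).

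Concretely, I would argue as follows. Suppose $u_i v_j$ is an edge of $G'$. First, $u$ and $v$ cannot both be terminal vertices of the terminal-to-copy type, since those are not of the form $u_i$; more precisely, if one endpoint is a bare terminal $t$ then the edge has the form $t\,t_k$, which is not of the form $u_i v_j$ with both indices in $W$, so we may assume the edge arises from an original edge $uv \in E$ with $f_{uv}(i,j) = \top$. Then for any $i' \ge i$ and $j' \ge j$ with $i', j' \in W$, monotonicity of $f_{uv}$ gives $f_{uv}(i',j') = \top$, and hence $u_{i'} v_{j'}$ is an edge of $G'$ by the construction rule.

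I do not anticipate any real obstacle here; the lemma is essentially a restatement of monotonicity after unwinding the reduction, and the only mild subtlety is to handle the terminal–copy edges so that the case analysis is complete. This monotonicity-transfer property is presumably the workhorse that will later let us, in the proof of Lemma~\ref{lem.D}, understand the structure of the bipartite graph $D$: within each "column" $U_v$ the copies are linearly ordered by weight, and edges out of $U_v$ behave monotonically with respect to that order, which is what will let us control the edges of $D$ even though $D$ itself need not be $H$-minor-free.
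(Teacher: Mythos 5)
Your proof is correct and takes exactly the same route as the paper, which simply observes that the lemma is immediate from the construction of $G'$ and the monotonicity of the activation functions. Your unwinding of the construction (including the aside about terminal--copy edges) is a faithful, slightly more explicit version of that one-line argument.
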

  \begin{proof}
   The lemma is immediate from the construction of $G'$ and the
   assumption that each edge in $G$ is associated with a monotone activation function.
  \end{proof}
  
  \begin{lemma}\label{lem.planer1}
    $\bar{X}$ does not contain any two distinct copies 
   of an original vertex.
  \end{lemma}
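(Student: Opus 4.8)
The plan is to derive a contradiction from the inclusion-wise minimality of $\bar{X}$ produced by the reverse-deletion phase. Suppose $\bar{X}$ contains two distinct copies $v_i$ and $v_j$ of the same original vertex $v\in V$, and assume without loss of generality that $i\le j$ (recall $W\subseteq\Rset_+$, so $i$ and $j$ are comparable). The first step is to record the neighborhood containment $N_{G'}(v_i)\subseteq N_{G'}(v_j)$. Since $G$ has no self-loop, the set $U_v$ is independent in $G'$; in particular $v_i$ and $v_j$ are non-adjacent, so $v_j\notin N_{G'}(v_i)$. Any remaining neighbor of $v_i$ is either (a) a copy $u_b$ with $u\ne v$, in which case applying Lemma~\ref{lem.reduction-prop} to the edge $u_bv_i$ together with $j\ge i$ gives the edge $u_bv_j$, or (b) the vertex $v$ itself, which occurs only when $v\in T$, and then $v$ is joined to all of its copies by construction, so $vv_j\in E(G')$. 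In either case the neighbor also lies in $N_{G'}(v_j)$.

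Next I would use this containment to reroute the Steiner tree $\bar{F}:=G'[\bar{X}]$ around $v_i$. Deleting $v_i$ from the tree $\bar{F}$ produces $d:=d_{\bar{F}}(v_i)$ components $C_1,\dots,C_d$, where each $C_k$ contains exactly one neighbor $n_k$ of $v_i$ in $\bar{F}$. As $v_j\ne v_i$, the vertex $v_j$ lies in one of these components, say $C_1$ after relabeling. For every $k\ge 2$ we have $n_k\in N_{G'}(v_i)\subseteq N_{G'}(v_j)$, so $v_jn_k$ is an edge of $G'$ joining $C_1$ to $C_k$; since both of its endpoints lie in $\bar{X}\setminus v_i$, this edge is present in $G'[\bar{X}\setminus v_i]$. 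Hence $G'[\bar{X}\setminus v_i]$ is connected. Moreover $v_i$ is a copy, and the copies are vertices of $G'$ distinct from the terminals in $T$, so $\bar{X}\setminus v_i$ still contains every terminal.

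Thus $\bar{X}\setminus v_i$ is a proper subset of $\bar{X}$ that induces a connected subgraph of $G'$ spanning $T$, which contradicts the fact that the reverse-deletion phase leaves $\bar{X}$ inclusion-wise minimal with this property (equivalently, $v_i$ would have been removed during that phase). Therefore $\bar{X}$ contains at most one copy of each original vertex. I do not expect a real obstacle here; the only points requiring care are bookkeeping ones: remembering that the terminal-to-copy edges are inserted directly by the reduction and hence are not covered by Lemma~\ref{lem.reduction-prop}, that the absence of self-loops in $G$ is precisely what makes $U_v$ independent (so that rerouting through $v_j$ neither creates a loop nor fails), and that the degenerate cases $|T|\le 1$, where the statement is vacuous, can be dismissed at the outset.
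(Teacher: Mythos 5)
Your proof is correct and follows essentially the same route as the paper: use Lemma~\ref{lem.reduction-prop} to show every neighbor of $v_i$ is also a neighbor of $v_j$, conclude that $\bar{X}\setminus v_i$ still induces a connected subgraph spanning $T$, and contradict the minimality of $\bar{X}$ from the reverse-deletion phase. You simply spell out details the paper leaves implicit (the terminal-to-copy edges and the explicit rerouting of the components of $\bar{F}-v_i$ through $v_j$), which is fine.
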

  \begin{proof}
   For the sake of a contradiction, suppose
  that $v_i,v_j \in \bar{X}$ for some $v \in V$ and $i,j \in W$ with
  $i < j$.
  If an edge $u_kv_i$ exists in $G'$,
   then another edge $u_k v_j$ also exists by Lemma~\ref{lem.reduction-prop}.
  This means that $\bar{X}\setminus v_i$
  induces a Steiner tree in $G'$, which contradicts the minimality of $\bar{X}$.
 \end{proof}

  \begin{lemma}\label{lem.planer2}
   Let $Y,Y' \in \Afam(X)$ with $Y \neq Y'$.
   If $Y \cap U_v \neq \emptyset$ for some $v \in V$, then $Y' \cap U_v=\emptyset$.
  \end{lemma}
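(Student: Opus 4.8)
The plan is to argue by contradiction: suppose two distinct active sets $Y, Y' \in \Afam(X)$ both meet $U_v$, say $v_i \in Y$ and $v_j \in Y'$. Since $Y$ and $Y'$ are distinct connected components of $G'[X]$, they are vertex-disjoint and there is no edge of $G'$ between them; in particular $v_i \ne v_j$, so $i \ne j$, and we may assume $i < j$ without loss of generality. The key structural fact I would exploit is monotonicity, packaged as Lemma~\ref{lem.reduction-prop}: every neighbor of the lower copy $v_i$ in $G'$ of the form $u_k$ is also a neighbor of the higher copy $v_j$, and moreover $v_j$ is adjacent to the terminal $v$ exactly when $v_i$ is (both are, by construction). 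So $N_{G'}(v_i) \subseteq N_{G'}(v_j)$ among copy-vertices, with the terminal-edges matching as well.

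From here I would derive the contradiction. Because $v_i \in Y$ and $Y$ contains a terminal (it is an active set), there is a path in $G'[X]$ from $v_i$ to some terminal; let $z$ be the neighbor of $v_i$ along that path, so $z \in X$ and $zv_i \in G'$. If $z$ is a copy-vertex $u_k$, then by Lemma~\ref{lem.reduction-prop} $u_k v_j$ is also an edge of $G'$; if $z = v$ is the terminal $v$ itself, then $v v_j$ is an edge of $G'$ by construction. Either way, $z$ is adjacent in $G'$ to $v_j \in Y'$. But $z$ lies in the component $Y$ (it is on a path inside $Y$ to a terminal, hence $z \in Y$, using that $Y$ is a connected component of $G'[X]$ and $v_i \in Y$), contradicting the fact that there is no edge of $G'[X]$ joining the distinct components $Y$ and $Y'$.

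The one point needing care — and the main obstacle — is the edge case where $v_i$ has no neighbor inside $X$ except possibly along a trivial path, i.e.\ whether $v_i$ could itself be the terminal or whether $Y = \{v_i\}$ is a singleton active set. If $v \in T$ and $v_i$ happens to equal $v$, then $v$ is a terminal and the statement is about $v \in Y$ and some copy $v_j \in Y'$; since $v$ is adjacent to all its copies including $v_j$, again we get an edge between $Y$ and $Y'$, a contradiction. If $Y = \{v_i\}$ with $v_i$ a proper copy, then $Y$ contains no terminal and is not an active set, so this case does not arise. Thus in all cases the existence of the edge $z v_j$ (or $v v_j$) with one endpoint in $Y$ and the other in $Y'$ contradicts $Y \ne Y'$ being separate components of $G'[X]$, completing the proof.
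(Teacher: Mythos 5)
Your proposal is correct and follows essentially the same route as the paper: pick $v_i\in Y$, $v_j\in Y'$ with $i<j$, use Lemma~\ref{lem.reduction-prop} to transfer a neighbor of $v_i$ inside $Y$ to a neighbor of $v_j$, and conclude that the edge between that neighbor and $v_j$ contradicts $Y$ and $Y'$ being distinct components of $G'[X]$. Your extra care about the neighbor being the terminal $v$ itself (which Lemma~\ref{lem.reduction-prop} does not literally cover) and about $Y$ not being a singleton is a slightly more explicit version of the same argument.
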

  \begin{proof}
   Suppose that
   $Y \cap U_v \neq \emptyset \neq Y' \cap U_v$.
   Let $v_i \in Y$ and $v_j \in Y'$ with $i< j$.
   A vertex adjacent to $v_i$ is also adjacent to $v_j$ in $G'$ by Lemma~\ref{lem.reduction-prop}.
   By the definition, $Y$ induces a connected component of $G'[X]$ that includes a
   terminal $t$.
   Hence $v_i$ has at least one neighbor in $Y$.
   This implies that $v_i$ and $v_j$ are connected in $G'[X]$.
   This contradicts the fact that $Y$ and $Y'$ are different connected components
   of $G'[X]$.
  \end{proof}

    \begin{figure}
     \centering
    \includegraphics[scale=.85]{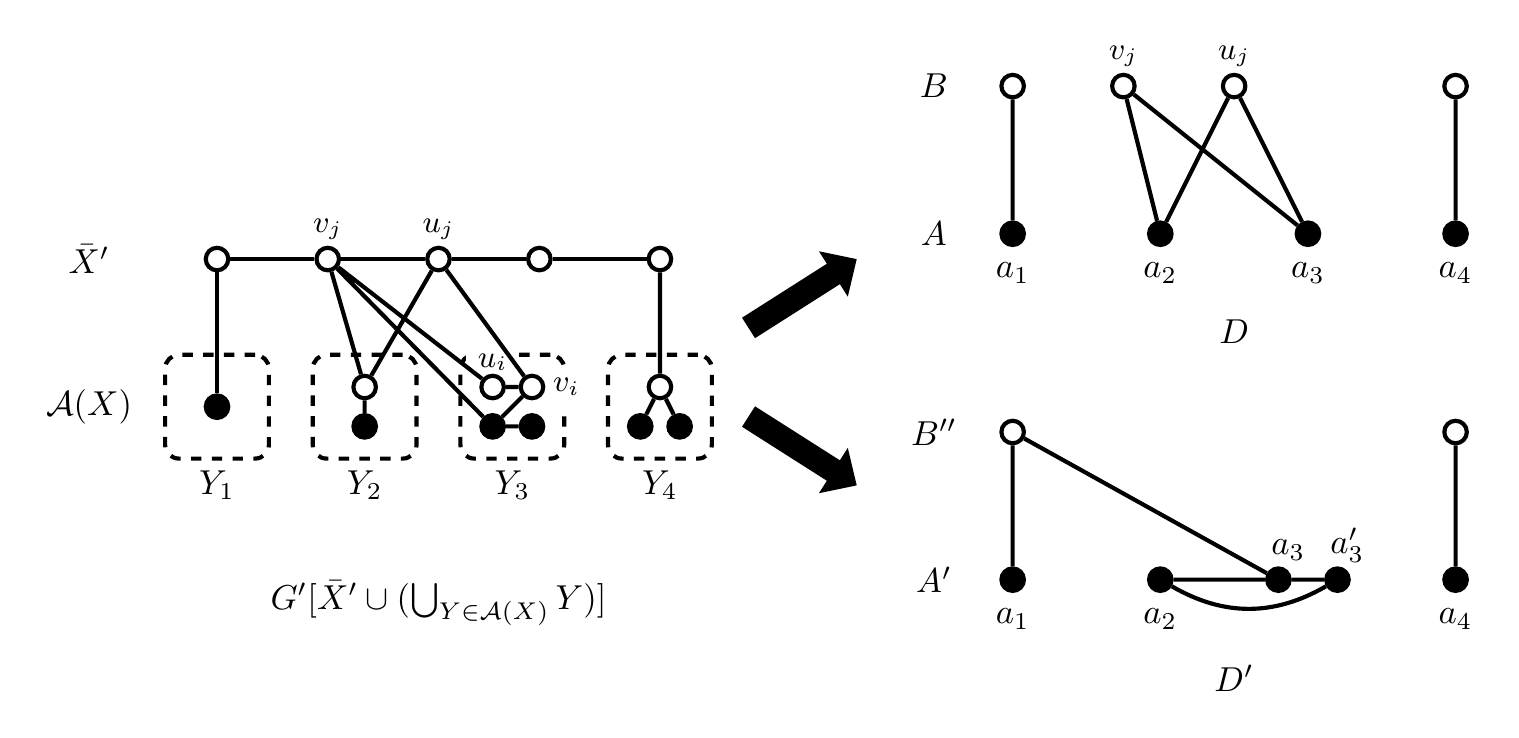}
    \caption{An example of $G'[\bar{X}' \cup (\bigcup_{Y \in
    \Afam(X)}Y)]$, $D$, and $D'$; in construction of $D'$, $\bar{Y}_3$
    is divided into two subsets, one of which contains $u_j$ and the
    other contains $v_j$; the former is shrunken into $a_3$ and the latter
    is shrunken into $a'_3$}
    \label{fig.bipartitegraph}
    \end{figure}

    To prove Lemma~\ref{lem.D}, we modify $D$ to obtain an
    $H$-minor-free graph $D'$. By this modification,
    the number of vertices does not increase,
    and the decrease of the number of edges is bounded.
    Thus, the $H$-minor-freeness of $D'$ implies that the number of
    edges of $D$ is bounded in terms of the number of its vertices.
    $D'$ is constructed by removing copies of the same vertices
    carefully. This is motivated by the observation that $D$ is not
    $H$-minor-free
    when it includes more than one copy of the same vertex
    of the original graph $G$. The construction is a bit
    complicated
    because copies of a vertex may appear both in an active set and in $\bar{X}'$.

    Now, we explain how to construct $D'$.
  Consider $Y \in \Afam(X)$ and $v \in V$ such that $Y \cap U_v \neq \emptyset$.
  Let $v_{i}$ be the vertex that has the largest subscript in $Y  \cap
  U_v$ (i.e., $i=\max\{i' \in W \colon v_{i'} \in Y \cap U_v\}$).
  Then, from $Y$, we remove all vertices in $Y\cap U_v$ but $v_i$.
  Moreover, if a copy $v_j$ of $v$ is included in $B$,
  we replace $v_i$ by $v_j$.
  Notice that
    $j > i$  holds in this case by Lemma~\ref{lem.reduction-prop},
    and $B$ does not include  more than one copy of $v$ because of Lemma~\ref{lem.planer1}.
  Let $\bar{Y}$ denote the vertex set obtained from $Y$ by doing these operations
  for each $v \in V$ with $Y \cap U_v \neq \emptyset$.
  $\bar{Y}$ induces a connected subgraph of $G'$
  because of Lemma~\ref{lem.reduction-prop}.
  
  We let $V_B$ denote $\{v \in V\colon B \cap U_v \neq \emptyset\}$,
  and let $V_{B,Y}$ denote
  $\{v \in V_B \colon \bar{Y} \cap U_v \neq \emptyset\}$ for each $Y
  \in \Afam(X)$.
  Moreover, let $B'$ denote $B \cap \{U_v \colon v\in  \bigcup_{Y \in
  \Afam(X)} V_{B, Y}\}$,
  and $B''$ denote $B \setminus B'$.
  In other words, each vertex $v_j \in B$  belongs to $B'$
  if and only if some copy $v_i$ of the same original vertex $v\in
  V$
  is contained by an active set in $\Afam(X)$.

  If $k:=|V_{B,Y}|\geq 2$,
  we divide $\bar{Y}$ into $k$ subsets
  such that
  the copies of the vertices in  $V_{B,Y}$ belong to different
  subsets,
  and each subset induces a connected subgraph of $G'$.
  Let $\Afam'(X)$ denote the family of vertex sets obtained by doing these
  operations to all active sets in $\Afam(X)$.
  Notice that $|\Afam'(X)|= |\Afam(X)| + \sum_{Y \in \Afam(X)}\max\{0,|V_{B,Y}|-1\}$.
  Lemma~\ref{lem.planer2} indicates that, if a vertex $v \in V_B$ belongs
  to $V_{B,Y}$ for some $Y \in \Afam(X)$,
  then it does not belong to $V_{B,Y'}$ for any $Y' \in
  \Afam(X)\setminus \{Y\}$.
  Thus, $\sum_{Y \in \Afam(X)}|V_{B,Y}| \leq |B'|$, and hence
  $|\Afam'(X)| \leq |\Afam(X)|+|B'|$.

  We shrink each $Z \in \Afam'(X)$ into a single vertex
  in the induced subgraph $G'[B'' \cup (\bigcup_{Z \in
  \Afam'(X)} Z)]$ of $G'$,
  and convert the obtained graph into a simple graph
  by removing all self-loops and by replacing multiple edges with single edges.
  Let $A'$ denote the set of vertices obtained by shrinking vertex sets
  in $\Afam'(X)$, and let
  $D'$ denote the obtained graph (with the vertex set $A' \cup B''$).
  See Figure~\ref{fig.bipartitegraph} for an illustration of this construction.

  We note that the division of $\bar{Y}$ into $k=|V_{B,Y}|$ subsets
  is required for showing that 
  the number of edges in $D'$ is not too smaller than that in $D$.
  If we shrink each $\bar{Y}$ instead of each $Z\in \Afam'(X)$,
  then two edges in $D$ (e.g., edges $a_2v_j$ and $a_2u_j$ in the
  example of Fig.~\ref{fig.bipartitegraph})
  may become parallel by the shrinking, and one of them is removed
  from $D'$. The removal of such edges may decrease the number of edges
  in $D'$ too much.

  We observe that $D'$ is $H$-minor-free in the following lemma.
  
   \begin{lemma}
    \label{lem.minorfree}
   If $G$ is $H$-minor-free, then $D'$ is $H$-minor-free.
   \end{lemma}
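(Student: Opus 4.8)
The plan is to show that $D'$ is a minor of $G$; since $G$ is $H$-minor-free, this immediately gives that $D'$ is $H$-minor-free. To exhibit $D'$ as a minor of $G$, I will exhibit a model of $D'$ in $G$: a collection of pairwise-disjoint connected vertex sets of $G$ (branch sets), one for each vertex of $D'$, such that whenever two vertices are adjacent in $D'$ the corresponding branch sets are joined by an edge of $G$. The natural candidates for branch sets come from the construction of $D'$: each vertex in $A'$ is a shrunken set $Z \in \Afam'(X)$, and each vertex in $B''$ is a single copy $v_j$; I will map each such object back to the original vertex (or vertices) of $V$ that its copies came from.

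First I would set up the projection $\phi$ that sends a copy $v_i \in U$ to its original $v \in V$ (and a terminal $t$ to itself, or rather to a vertex of $V$ it is attached to --- one must be slightly careful, but terminals in $G'$ are attached only to their own copies $t_i$, so $\phi$ extends naturally). The key structural facts I would invoke are: (i) by Lemma~\ref{lem.planer1}, $\bar X$ — and hence each active set $Y$ and the set $B$ — contains at most one copy of each original vertex \emph{after} the reduction's minimality is used; more importantly, after the modification producing $\bar Y$, each $\bar Y$ contains at most one copy of each $v \in V$; (ii) by Lemma~\ref{lem.planer2}, the original vertices appearing (as copies) in distinct active sets are disjoint, and by construction the division of $\bar Y$ into the sets $Z \in \Afam'(X)$ separates the copies coming from $V_{B,Y}$, so that distinct $Z$'s still have $\phi(Z)$ pairwise disjoint; (iii) each $\bar Y$, and each piece $Z$ of it, induces a \emph{connected} subgraph of $G'$ (established in the construction using Lemma~\ref{lem.reduction-prop}), hence $\phi(Z)$ induces a connected subgraph of $G$ because an edge $u_i v_j$ of $G'$ projects to the edge $uv$ of $G$ (or to a single vertex if $u=v$). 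Thus $\{\phi(Z) : Z \in \Afam'(X)\} \cup \{\{v\} : v_j \in B''\}$ is a family of connected, pairwise-disjoint vertex sets of $G$: disjointness across $A'$-branch sets is (ii); disjointness of $B''$-branch sets from each other is that $B = B' \cup B''$ has $B''$ containing copies whose originals do \emph{not} appear in any active set, and $B$ has at most one copy per original; disjointness of a $B''$-branch set from the $A'$-branch sets is exactly the defining property of $B''$.

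Then I would check edge-preservation: if $Z$ and $Z'$ are adjacent in $D'$, there is an edge of $G'$ between a vertex of $Z$ and a vertex of $Z'$, say $u_i \in Z$, $v_j \in Z'$; this projects to an edge $uv$ of $G$ with $u \in \phi(Z)$, $v \in \phi(Z')$, and $u \neq v$ precisely because $\phi(Z) \cap \phi(Z') = \emptyset$. The same argument handles an edge between $Z \in A'$ and a singleton $\{v\}$ for $v_j \in B''$. Hence contracting each branch set gives a graph containing $D'$ as a subgraph, so $D'$ is a minor of $G$. The main obstacle I anticipate is (iii) combined with the bookkeeping in (ii): one must be careful that the modification steps producing $\bar Y$ from $Y$ — deleting all but the top copy $v_i$ in $Y \cap U_v$, and the possible swap of $v_i$ for a copy $v_j \in B$ — do not break either connectivity of $\bar Y$ in $G'$ or the disjointness of the projections across different active sets. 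Connectivity is salvaged by Lemma~\ref{lem.reduction-prop} (any neighbor of $v_i$ in $G'$ is a neighbor of $v_j$ for $j \geq i$, so raising the subscript only adds edges), and disjointness of projections is exactly what Lemma~\ref{lem.planer2} gives together with the fact that the subdivision into the $Z$'s was explicitly chosen to put copies of distinct elements of $V_{B,Y}$ into distinct pieces; so the essential content is assembling these facts into one clean model of $D'$ inside $G$.
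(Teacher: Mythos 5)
Your proposal is correct and matches the paper's own argument: both show that $D'$ is a minor of $G$ by observing that each original vertex has at most one copy in $B''\cup\bigcup_{Z\in\Afam'(X)}Z$ (so this induced subgraph projects to a subgraph of $G$), that each $Z$ is connected, and that edges $u_iv_j$ of $G'$ project to edges $uv$ of $G$. Your phrasing in terms of an explicit minor model with branch sets $\phi(Z)$ is just a more detailed presentation of the same contraction step the paper performs.
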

  \begin{proof}
  By Lemma~\ref{lem.planer1} and the construction of $\Afam'(X)$,
  each vertex in $V$ has at most one copy in $B'' \cup (\bigcup_{Z \in
   \Afam'(X)} Z)$.
   If $G'[B'' \cup (\bigcup_{Z \in \Afam'(X)} Z)]$ includes an edge
   $u_iv_j$ for $u_i \in U_u$ and $v_j \in U_v$, then $G$ also includes
   an edge $uv$.
   Thus $G'[B'' \cup (\bigcup_{Z \in \Afam'(X)} Z)]$ is isomorphic to a subgraph of $G$.
   Since each $Z\in \Afam'(X)$  induces a connected subgraph of $G'$,
   the graph $D'$ (constructed from $G'[B'' \cup (\bigcup_{Z \in
   \Afam'(X)} Z)]$ by shrinking each $Z\in \Afam'(X)$)
   is a minor of $G$.
   Hence if $G$ is $H$-minor-free, $D'$ is also $H$-minor-free.
  \end{proof}

  The following lemma gives a relationship between $D$ and $D'$.

    \begin{lemma}
     \label{lem.num_edges}
    If $l$ is the number of edges in $D'$,
    then $D$ contains at most $l+|B'|$
    edges.
    \end{lemma}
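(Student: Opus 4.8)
The plan is to charge the edges of $D$ against the edges of $D'$ together with the set $B'$, by constructing a map $\phi\colon E(D)\to E(D')$ that is injective outside a set of at most $|B'|$ edges of $D$. First some preparation. Using Lemmas~\ref{lem.planer1}, \ref{lem.planer2} and~\ref{lem.reduction-prop}, I would show that the sets $\bar{Y}$ ($Y\in\Afam(X)$) are pairwise disjoint: a vertex of $Y$ cannot coincide with a kept or swapped‑in copy of $\bar{Y'}$ for $Y\neq Y'$, since that would force copies of one original vertex to occur in two active sets, contradicting Lemma~\ref{lem.planer2} (and $B$ has at most one copy of each original vertex by Lemma~\ref{lem.planer1}). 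Consequently the members of $\Afam'(X)$ are pairwise disjoint, and each $Z\in\Afam'(X)$ lies in a unique $\bar{Y}$. For $b=v_j\in B'$ there is, again by Lemma~\ref{lem.planer2}, a unique active set $Y(b)$ with $Y(b)\cap U_v\neq\emptyset$; by construction $v_j$ is exactly the copy of $v$ kept in $\bar{Y}(b)$, and I write $Z(b)\in\Afam'(X)$ for the connected part of $\bar{Y}(b)$ containing it. Finally, recall that every vertex of $B''$ is untouched and is a vertex of $D'$, and that $B''$ is disjoint from $\bigcup_{Z\in\Afam'(X)}Z=\bigcup_{Y}\bar{Y}$.

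Next I define $\phi$. An edge of $D$ is a pair $(Y,b)$ with a witness $y\in Y$, $yb\in E(G')$; say $y\in U_v$. By monotonicity (Lemma~\ref{lem.reduction-prop}) the copy $v_\ast$ of $v$ kept in $\bar{Y}$ has index at least that of $y$, hence $v_\ast b\in E(G')$ as well; let $P\in\Afam'(X)$ be the part of $\bar{Y}$ containing $v_\ast$. Set
\[
\phi(Y,b)=\begin{cases}\{P,\,b\}, & b\in B'',\\[2pt]\{P,\,Z(b)\}, & b\in B'.\end{cases}
\]
I would check $\phi(Y,b)$ is a genuine edge of $D'$: its two endpoints are adjacent in $G'[B''\cup\bigcup_Z Z]$, and they survive the shrinking because they are distinct shrunk vertices — when $b\in B''$ since $P\subseteq\bigcup_Z Z$ while $b\notin\bigcup_Z Z$, and when $b\in B'$ with $Y\neq Y(b)$ since then $P$ and $Z(b)$ lie in the disjoint sets $\bar{Y}$ and $\bar{Y}(b)$. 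The only $\phi(Y,b)$ that can fail to be an edge of $D'$ is a self‑loop, and that happens only when $b\in B'$ and $Y=Y(b)$; there are at most $|B'|$ such edges $(Y(b),b)$ in $D$ (one per $b$), and these I discard, paying for them with $B'$.

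It remains to argue that $\phi$ is injective on the edges not discarded. Images with $b\in B''$ are $A'$–$B''$ edges and images with $b\in B'$ are $A'$–$A'$ edges, so the two families cannot collide. Within the first family one recovers $b$ and then $Y$ from the image using disjointness of the $\bar{Y}$'s; within the second, $Z(b)$ still determines $b$ because the seeds coming from distinct original vertices of $V_{B,Y}$ are placed in distinct parts of $\bar{Y}$ — this is exactly the reason the construction splits each $\bar{Y}$ into $|V_{B,Y}|$ connected parts rather than shrinking it whole (cf.\ Fig.~\ref{fig.bipartitegraph}). The delicate case is a ``crossing'' coincidence $\phi(Y(b_2),b_1)=\phi(Y(b_1),b_2)$ with $b_1,b_2\in B'$ distinct and $Y(b_1)\neq Y(b_2)$; I would handle it by always choosing, when possible, a witness whose original vertex is not the seed of $Z(b_1)$ (resp.\ $Z(b_2)$), and choosing the splitting of each $\bar{Y}$ into connected parts compatibly, so that for each $b\in B'$ at most one edge of $D$ ends up in the discarded set. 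Granting this, $\phi$ is injective on $E(D)$ minus at most $|B'|$ edges, hence $|E(D)|\le|E(D')|+|B'|$.

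I expect the main obstacle to be precisely that last paragraph: the map $\phi$ is only ``morally'' an injection, and it genuinely collapses two edges of $D$ to one edge of $D'$ when a single edge of $G'$ joins the kept representatives of two different $\bar{Y}$'s; the work is in bounding how often this can happen (it forces $b_1b_2\in E(G')$ with $b_1,b_2\in B'$, which ties such collapses to the structure of the minimal Steiner tree $\bar X$) and in arranging the choice of representatives and of the splitting of each $\bar{Y}$ so that the total loss stays within $|B'|$. Everything else — disjointness of the $\bar{Y}$'s, and $\phi$ landing in $E(D')$ — is routine bookkeeping with Lemmas~\ref{lem.planer1}–\ref{lem.reduction-prop}.
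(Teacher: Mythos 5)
Your overall strategy is the same as the paper's: classify the edges $\{a_Y,b\}$ of $D$ according to whether $b\in B''$, or $b\in B'$ with $Z(b)\subseteq\bar Y$ (the ``self-loop'' case), or $b\in B'$ with $Z(b)$ in a different $\bar Y$; charge the self-loop edges (at most one per element of $B'$) to the $|B'|$ term; and map every remaining edge of $D$ to an edge of $D'$. Your preparatory claims -- pairwise disjointness of the $\bar Y$'s via Lemmas~\ref{lem.planer1} and~\ref{lem.planer2}, disjointness of $B''$ from $\bigcup_{Z}Z$, and the fact that $\phi(Y,b)$ is a genuine edge of $D'$ whenever it is not a self-loop (using Lemma~\ref{lem.reduction-prop} to pass from the witness to the kept copy) -- are correct and are exactly the ingredients the paper uses.

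The gap is the one you flag yourself, and it is not closed by your proposed repair. The crossing coincidence, where the single $D'$ edge $\{Z(b_1),Z(b_2)\}$ is the image of both $\{a_{Y(b_1)},b_2\}$ and $\{a_{Y(b_2)},b_1\}$, is determined by which edges of $G'$ run between $\bar X'$ and the active sets; re-choosing the witness only helps ``when possible,'' and if every neighbour of $b_2$ in $Y(b_1)$ has its kept copy inside $Z(b_1)$ the collision cannot be avoided by any choice of witness or of the splitting of $\bar Y$. Your final accounting, ``for each $b\in B'$ at most one edge of $D$ ends up in the discarded set,'' is also unjustified: a single $b$ can simultaneously contribute the self-loop edge $\{a_{Y(b)},b\}$ and take part in several crossing pairs, so it is not clear that the total loss stays within $|B'|$. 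The paper disposes of this multiplicity issue in one step, by observing that $b\mapsto Z(b)$ is injective on $B'$ (by the construction of $\Afam'(X)$), so that each edge of $D'$ can be decoded back to a bounded number of edges of $D$; it is precisely this decoding/charging step -- the only non-routine part of the lemma -- that your write-up leaves as a plan rather than a proof. To complete the argument you must either show that the crossing coincidence cannot occur, or exhibit a charging scheme that absorbs all such collisions together with the self-loop edges into the single $|B'|$ term.
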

\begin{proof}
   Let $av_j$ be an edge in $D$
  that joins vertices $a \in A$ and $v_j \in B$.
 Suppose that $a$ is a vertex obtained by shrinking $Y \in \Afam(X)$,
 and $v_j$ is a copy of $v \in V$.
 Remember that $v_j$ belongs to either $B'$  or $B''$.
 If $v_j \in B'$,
 it is contained by a vertex set in $\Afam'(X)$, denoted by $Z_v$.
 We consider the following three cases:
 \begin{enumerate}
  \item $v_j \in B'$ and $Z_v \subseteq \bar{Y}$
  \item $v_j \in B'$ and $Z_v \not\subseteq \bar{Y}$
  \item $v_j \in B''$
 \end{enumerate}

 In the second case,
 an edge in $D'$ joins vertices obtained by shrinking $Z_v$ and a subset
 of $\bar{Y}$.
 In the third case, $v_j$ exists in $D'$, and $D'$ includes an edge that
 joins $v_j$ and 
 the vertex obtained by shrinking a subset of $\bar{Y}$.
 Thus $D'$ includes an edge corresponding to $av_j$ in these two cases.
 We can also observe that no edge in $D'$ corresponds to more than two
 such edges $av_j$. This is because $Z_u \neq Z_v$ for any distinct
 vertices $u_i$ and $v_j$ in $B'$ by the construction of $\Afam'(X)$.
 
 In the first case, $D'$ may not contain an edge corresponding to $av_j$.
 However, the number of such edges is at most $|B'|$ in total
 because $\bar{Y}$ are uniquely determined from $v_j$ in this case.
 Therefore, the number of edges in $D$ is at most $l+|B'|$.
\end{proof}

We now prove Lemma~\ref{lem.D}.

 \begin{proof}[Lemma~\ref{lem.D}]
  The number of vertices in $D'$
  is at most $|\Afam'(X)|+|B''| \leq |\Afam(X)|+|B'|+|B''|=|A|+|B|$.
  As we mentioned, we can prove $|B|\leq 2|A|$ similar to Demaine~et~al.~\cite{DemaineHK09a}.
  Hence $D'$ contains at most $3|A|$ vertices.
  By Lemma~\ref{lem.minorfree}, $D'$ is $H$-minor-free.
  It is known~\cite{Kostochka84,Thomason01} that 
  the number of edges in an $H$-minor-free graph with $n$ vertices
  is $O(n|V(H)| \sqrt{\log |V(H)|})$.
   Therefore, the number of edges in $D'$
  is $O(|A||V(H)| \sqrt{\log |V(H)|})$.
  By Lemma~\ref{lem.num_edges}, this implies that the number of edges in $D$
  is $|B'|+O(|A||V(H)| \sqrt{\log |V(H)|})=O(|A||V(H)| \sqrt{\log |V(H)|})$.
  This fact and Theorem~\ref{thm:demaine} prove the former part of the lemma.
  
  If $G$ is planar,
   by Euler's formula, the number of edges in $D'$
  is at most $3(|A|+|B|)$.
  Hence, by Lemma~\ref{lem.num_edges}, the number of edges in $D$
  is at most $3(|A|+|B|) + |B'| \leq 3|A|+4|B| \leq 11|A|$.
  The latter part of the lemma follows from this fact and  Theorem~\ref{thm:demaine}. 
 \end{proof}

   \section{Conclusion}
   \label{sec:conclusion}
   In this paper, we formulate
   the VC-weighted Steiner tree problem,
   a new variant of the vertex-weighted
   Steiner tree and the Steiner tree activation problems.
   We proved that it is NP-hard for unit disk graphs and planar graphs.
   We also presented constant-factor approximation
   algorithms
   for the \vcst in unit disk graphs and for the Steiner tree
   activation problem in graphs excluding a fixed minor.

   An interesting future work is to investigate VC-weighted spanning
   tree or \vcst with unit weights for unit disk graphs and planar
   graphs.
   We do not know whether these problems are NP-hard or admit exact
   polynomial-time algorithms.
   Finding a constant-factor approximation algorithm for the Steiner
   tree activation problem in unit disk graphs also remains an open
   problem.

   \section*{Acknowledgements}
    The first author was supported by JSPS KAKENHI Grant Number JP17K00040.

\bibliographystyle{abbrv}
\bibliography{cds}

\begin{thebibliography}{10}

\bibitem{AmbuhlEMN06}
C.~Amb{\"{u}}hl, T.~Erlebach, M.~Mihal{\'{a}}k, and M.~Nunkesser.
\newblock Constant-factor approximation for minimum-weight (connected)
  dominating sets in unit disk graphs.
\newblock In {\em Approximation, Randomization, and Combinatorial Optimization.
  Algorithms and Techniques, 9th International Workshop on Approximation
  Algorithms for Combinatorial Optimization Problems, {APPROX} 2006 and 10th
  International Workshop on Randomization and Computation, {RANDOM} 2006.
  Proceedings}, pages 3--14, 2006.

\bibitem{AngelBCK15}
E.~Angel, E.~Bampis, V.~Chau, and A.~V. Kononov.
\newblock Min-power covering problems.
\newblock In {\em Algorithms and Computation - 26th International Symposium,
  {ISAAC} 2015. Proceedings}, pages 367--377, 2015.

\bibitem{ByrkaGRS13}
J.~Byrka, F.~Grandoni, T.~Rothvo{\ss}, and L.~Sanit{\`{a}}.
\newblock {S}teiner tree approximation via iterative randomized rounding.
\newblock {\em Journal of the {ACM}}, 60(1):6, 2013.

\bibitem{ChanGKS12}
T.~M. Chan, E.~Grant, J.~K{\"{o}}nemann, and M.~Sharpe.
\newblock Weighted capacitated, priority, and geometric set cover via improved
  quasi-uniform sampling.
\newblock In {\em Proceedings of the Twenty-Third Annual {ACM-SIAM} Symposium
  on Discrete Algorithms, {SODA} 2012}, pages 1576--1585, 2012.

\bibitem{ChengHLWD03}
X.~Cheng, X.~Huang, D.~Li, W.~Wu, and D.~Du.
\newblock A polynomial-time approximation scheme for the minimum-connected
  dominating set in ad hoc wireless networks.
\newblock {\em Networks}, 42(4):202--208, 2003.

\bibitem{FonsecaSF14}
G.~D. da~Fonseca, V.~G.~P. de~S{\'a}, and C.~M.~H. de~Figueiredo.
\newblock Linear-time approximation algorithms for unit disk graphs.
\newblock In {\em Approximation and Online Algorithms - 12th International
  Workshop, {WAOA} 2014. Revised Selected Papers}, pages 132--143, 2014.

\bibitem{DemaineHK09a}
E.~D. Demaine, M.~Hajiaghayi, and P.~N. Klein.
\newblock Node-weighted {S}teiner tree and group {S}teiner tree in planar
  graphs.
\newblock In {\em Automata, Languages and Programming, 36th International
  Colloquium, {ICALP} 2009. Proceedings, Part {I}}, pages 328--340, 2009.

\bibitem{DinurS14}
I.~Dinur and D.~Steurer.
\newblock Analytical approach to parallel repetition.
\newblock In {\em Symposium on Theory of Computing, {STOC} 2014, New York, NY,
  USA, May 31 - June 03, 2014}, pages 624--633, 2014.

\bibitem{EisenbrandGRS10}
F.~Eisenbrand, F.~Grandoni, T.~Rothvo{\ss}, and G.~Sch{\"{a}}fer.
\newblock Connected facility location via random facility sampling and core
  detouring.
\newblock {\em Journal of Computer and System Sciences}, 76(8):709--726, 2010.

\bibitem{GareyJ77}
M.~R. Garey and D.~S. Johnson.
\newblock The rectilinear steiner tree problem in {NP} complete.
\newblock {\em {SIAM} Journal of Applied Mathematics}, 32:826--834, 1977.

\bibitem{GoemansB93}
M.~X. Goemans and D.~Bertsimas.
\newblock Survivable networks, linear programming relaxations and the
  parsimonious property.
\newblock {\em Mathematical Programming}, 60:145--166, 1993.

\bibitem{GoemansORZ12}
M.~X. Goemans, N.~Olver, T.~Rothvo{\ss}, and R.~Zenklusen.
\newblock Matroids and integrality gaps for hypergraphic steiner tree
  relaxations.
\newblock In {\em Proceedings of the 44th Symposium on Theory of Computing
  Conference, {STOC} 2012}, pages 1161--1176, 2012.

\bibitem{GuhaK98}
S.~Guha and S.~Khuller.
\newblock Approximation algorithms for connected dominating sets.
\newblock {\em Algorithmica}, 20(4):374--387, 1998.

\bibitem{HuangLS15}
L.~Huang, J.~Li, and Q.~Shi.
\newblock Approximation algorithms for the connected sensor cover problem.
\newblock In {\em Computing and Combinatorics - 21st International Conference,
  {COCOON}. Proceedings}, pages 183--196, 2015.

\bibitem{KleinR93a}
P.~N. Klein and R.~Ravi.
\newblock A nearly best-possible approximation algorithm for node-weighted
  {S}teiner trees.
\newblock In {\em Proceedings of the 3rd Integer Programming and Combinatorial
  Optimization Conference}, pages 323--332, 1993.

\bibitem{Kostochka84}
A.~V. Kostochka.
\newblock Lower bound of the {H}adwiger number of graphs by their average
  degree.
\newblock {\em Combinatorica}, 4(4):307--316, 1984.

\bibitem{LiJ15}
J.~Li and Y.~Jin.
\newblock A {PTAS} for the weighted unit disk cover problem.
\newblock In {\em Automata, Languages, and Programming - 42nd International
  Colloquium, {ICALP} 2015, Kyoto, Japan, July 6-10, 2015, Proceedings, Part
  {I}}, pages 898--909, 2015.

\bibitem{Panigrahi11}
D.~Panigrahi.
\newblock Survivable network design problems in wireless networks.
\newblock In {\em Proceedings of the Twenty-Second Annual {ACM-SIAM} Symposium
  on Discrete Algorithms, {SODA} 2011}, pages 1014--1027, 2011.

\bibitem{Perlman85}
R.~J. Perlman.
\newblock An algorithm for distributed computation of a spanningtree in an
  extended {LAN}.
\newblock In {\em Proceedings of the Ninth Symposium on Data Communications,
  {SIGCOMM} '85}, pages 44--53, 1985.

\bibitem{SwamyK04}
C.~Swamy and A.~Kumar.
\newblock Primal-dual algorithms for connected facility location problems.
\newblock {\em Algorithmica}, 40(4):245--269, 2004.

\bibitem{Thomason01}
A.~Thomason.
\newblock The extremal function for complete minors.
\newblock {\em Journal of Combinatorial Theory, Series {B}}, 81(2):318--338,
  2001.

\bibitem{ZouLGW09}
F.~Zou, X.~Li, S.~Gao, and W.~Wu.
\newblock Node-weighted {S}teiner tree approximation in unit disk graphs.
\newblock {\em Journal of Combinatorial Optimization}, 18(4):342--349, 2009.

\bibitem{ZouLKW08}
F.~Zou, X.~Li, D.~Kim, and W.~Wu.
\newblock Two constant approximation algorithms for node-weighted steiner tree
  in unit disk graphs.
\newblock In {\em Combinatorial Optimization and Applications, Second
  International Conference, {COCOA} 2008. Proceedings}, pages 278--285, 2008.

\bibitem{ZouWXLDWW11}
F.~Zou, Y.~Wang, X.~Xu, X.~Li, H.~Du, P.~Wan, and W.~Wu.
\newblock New approximations for minimum-weighted dominating sets and
  minimum-weighted connected dominating sets on unit disk graphs.
\newblock {\em Theoretical Computer Science}, 412(3):198--208, 2011.

\end{thebibliography}

\end{document}